\newtheorem{theorem}{Theorem}
\newtheorem{lemma}[theorem]{Lemma}
\newenvironment{proof}[1][Proof]{\noindent\textbf{#1.} }{\ \rule{0.5em}{0.5em}}
\newcommand{\ind}{1\hspace{-2.1mm}{1}} 
\newcommand{\RR}{\mathbb{R}}
\newcommand{\BSB}{\overline{\mathrm{BS}}}
\newcommand{\BS}{\mathrm{BS}}
\newcommand{\Ee}{\mathcal{E}}
\newcommand{\Nn}{\mathcal{N}}
\newcommand{\rha}{r}
\newcommand{\Oo}{\mathcal{O}}
\newcommand{\EE}{\mathbb{E}}
\newcommand{\PP}{\mathbb{P}}
\newcommand{\Ds}{\partial_{\sigma}}
\newcommand{\Dss}{\partial^2_{\sigma\sigma}}
\newcommand{\Dx}{\partial_{x}}
\newcommand{\Dxx}{\partial^{2}_{xk}}
\newcommand{\DK}{\partial_{K}}
\newcommand{\DKK}{\partial^{2}_{KK}}
\newcommand{\Dk}{\partial_{k}}
\newcommand{\Dkk}{\partial^{2}_{kk}}
\newcommand{\Dsk}{\partial^{2}_{\sigma k}}
\begin{document}
\title{The implied volatility of Forward-Start options: ATM short-time level, skew and curvature}
\author{Elisa Al\`{o}s
 \\
Dpt. d'Economia i Empresa\\
and Barcelona GSE\\
Universitat Pompeu Fabra\\
c/Ramon Trias Fargas, 25-27\\
08005 Barcelona, Spain 
\and Antoine Jacquier
\\ Department of Mathematics\\
Imperial College London, London SW7 2AZ, UK\\
and\\
Department of Mathematics\\
Baruch College, CUNY, New-York
\and Jorge A. Le\'{o}n
\\
Control Autom\'{a}tico\\
CINVESTAV-IPN\\
Apartado Postal 14-740\\
07000 Ciudad de M\'{e}xico,  Mexico}
\date{}

\maketitle

\begin{abstract}
Using Malliavin Calculus techniques, 
we derive closed-form expressions for the at-the-money behaviour of the forward implied volatility, 
its skew and its curvature, in general Markovian stochastic volatility models with continuous paths.

\vspace{0.2cm}

Keywords: Forward-Start options, implied volatility, Malliavin calculus,  stochastic volatility models

2010 \textit{Mathematics Subject Classification}: 91G99, 60H07.
\end{abstract}

\section{Introduction}

For any fixed evaluation time~$t$, we consider a Forward-Start call option 
(originally introduced by Rubinstein~\cite{Rubinstein}) 
with forward-start date $s>t$ 
and maturity $T>s$, written on some underlying stock price process~$S$;
the particular feature of this option is that it allows the holder to receive, at the future time~$s$ and at no additional cost,
a standard European Call option expirying at~$T$, with strike set to~$KS_s$, for some $K>0$. 
Classical occurrences of Forward-Start options include for example employee stock options 
and cliquet options~\cite{Rubinstein}.
In the Black-Scholes formula, stationarity of the increments implies, 
by a simple conditional expectation argument, that the Forward-Start option corresponds exactly
to a plain Call option with time to maturity~$T-s$. 
This is not true any longer for general stochastic volatility models, though,
and the resetting at the forward-start date makes the whole analysis a lot more subtle.
Lucic~\cite{Lucic} and Musiela and Rutkowski~\cite[I.7.1.10]{MR} 
applied a change-of-measure argument to relate the price of a Forward-Start option 
to that of a standard Call option, albeit with a randomised starting volatility. 
Pricing formulae based on the knowledge of the characteristic function--hence mainly applicable to 
affine models, in the sense of~\cite{DFS}--were derived by Kruse and N\"ogel~\cite{KN} and by Guo and Hung~\cite{Guo}.

Because of this resetting feature, the implied volatility of Forward-Start options 
is substantially different from the usual vanilla smile.
In a series of papers, Jacquier and Roome~\cite{JR1, JR2, JR3} studied these specificities,
and conducted a thorough analysis in the case of the Heston model.
They in particular singled out the explosive nature of the forward smile as the remaining maturity becomes small. 
Their analysis was based on the knowledge of the characteristic function of the underlying process, 
and the asymptotic behaviour thereof.
Recently, Mazzon and Pascucci~\cite{Mazzon} took over the topic and proved an approximation of the out-of-the-money forward smile in multi-factor local stochastic volatility models, using expansions for parabolic equations.
We concentrate here on the at-the-money (ATM) case,
and characterise the short-time limit of the forward implied volatility, its skew and its curvature, for general Markovian stochastic volatility models with continuous paths. 
Using Malliavin Calculus techniques, we show that--contrary to the classical Vanilla case--the ATM 
short-time level is a direct function of the correlation between the underlying and its instantaneous volatility. 
The at-the-money skew depends of the Malliavin derivative of the volatility process, 
in a similar way as for Vanilla options, 
but the curvature decays at the speed $\Oo(T-s)$. 

In Section~\ref{sec:General}, we introduce Forward-Start options and the main notations used throughout the paper, and prove a decomposition formula for the option price,
which we use in Section~\ref{sec:ATM} to compute the asymptotic behaviour of the at-the-money implied volatility level, skew and curvature. 
We apply these formulae in Section~\ref{sec:Example} to a generalised version of the Stein-Stein stochastic volatility model,
and postpone the proofs of the main results to Section~\ref{sec:Proofs}.

\section{Forward-Start options and the decomposition formula}\label{sec:General}
We consider a generic stochastic volatility model over a finite time horizon~$[0,T]$, 
solution to the stochastic differential equation
\begin{equation}\label{eq:stock}
d X_{t} = \left(\rha - \frac{1}{2}\sigma_t^2\right) dt
 + \sigma_{t}\left( \rho dW_{t}^{*}+\sqrt{1-\rho ^{2}}B_{t}^{*}\right),
\end{equation}
where~$\rha$ is the instantaneous interest rate (assumed constant), 
$W^{*}$ and~$B^{*}$ are independent standard Brownian
motions on a given probability space 
$\left( \Omega ,\mathcal{F},\PP^{\ast}\right) $ and~$\sigma$ is a positive, 
square-integrable process, adapted to the filtration generated by~$W^{*}$.
Here, $X$ denotes the logarithm of a stock price, and we consider the dynamics~\eqref{eq:stock}
directly under a given risk-neutral probability measure~$\PP^*$.
We shall denote by~$\mathcal{F}^{W^{*}}$ and~$\mathcal{F}^{B^{*}}$ the
filtrations generated by~$W^{*}$ and~$B^{*}$, 
and define $\mathcal{F}:=\mathcal{F}^{W^{*}}\vee \mathcal{F}^{B^{*}}$.
We are interested here in computing the price of a Forward-Start option, 
the payoff of which, at maturity~$T$, is equal to 
$\left(e^{X_{T}} - e^{\alpha}e^{X_{s}}\right)_{+}$, where $s\in [0,T]$ is the forward-start date
and~$\alpha\in\mathbb{R}$ the log-forward moneyness.
This is usually called Type-II Forward-Start, and we refer the reader to~\cite{Lucic}
for details about Type-I and the symmetries between both types.
Classical no-arbitrage arguments yield that the price at inception $t\leq s$ is given by
\begin{equation}
V_{t}=e^{-\rha(T-t) }\EE^{*}_{t}\left( e^{X_{T}} - e^{\alpha}e^{X_{s}}\right)_{+},  \label{price formula}
\end{equation}
where $\EE^{*}$ denotes the conditional expectation under~$\PP^*$ given~$\mathcal F_t$. 
If $t\geq s$, this is simply a standard European Call option evaluated during the life of the contract.
We shall denote by $\BS(t,x,K,\sigma ) := e^{x}\Nn(d_{+}) - e^{-\rha(T-t)} K \Nn(d_{-})$ 
the price of a European call option in the Black-Scholes model 
with constant volatility~$\sigma$, current log-stock price~$x$, time to maturity $T-t,$ 
strike~$K$ and interest rate~$\rha$.
Here~$\Nn$ is the Gaussian cumulative distribution function, and
$$
d_{\pm }:=\frac{x-\ln K+\rha(T-t)}{\sigma \sqrt{T-t}}\pm \frac{\sigma }{2}\sqrt{T-t}.
$$
The corresponding Black-Scholes differential operator (in the log variable) is denoted by 
$$
\mathcal{L}_{\BS}(\sigma)
 := \partial_{t} + \frac{\sigma^{2}}{2}\Dxx
 + \left(\rha-\frac{\sigma^{2}}{2}\right)\Dx - \rha,
$$
so that in particular $\mathcal{L}_{\BS}\left( \sigma \right) \BS\left( \cdot,\cdot ;K,\sigma \right) =0$.
The inverse of~$\BS$ with respect to volatility shall be denoted by $\BS^{-1}(\cdot):=\BS^{-1}(t,x, K,\cdot)$,
while $\alpha^*:=\rha(T-s)$ represents the at-the-money forward log-moneyness,
which will be a quantity of particular interest.
We will finally make heavy use of the following two functions:
$$
G(t,x,K,\sigma ):=(\Dxx-\Dx)\BS(t,x,K,\sigma)
\quad\text{and}\quad
H(t,x,K,\sigma):=\Dx G(t,x,K,\sigma).
$$
We introduce the strike-adjusted forward process
\begin{align*}
M_{t} & := \EE^{*}_{t}\left(e^{\alpha}e^{X_{s}}\right)
 = e^{\alpha+\rha s+X_{0}}+\int_{0}^{t}\sigma_{u}
 \ind_{\left[0,s\right] }(u)e^{\alpha + X_{u}}e^{\rha(s-u)}\left( \rho
dW_{u}^{*}+\sqrt{1-\rho ^{2}}dB_{u}^{*}\right) \\
 & = M_{0} + e^{\alpha}\int_{0}^{t\wedge s}\sigma_{u}e^{\rha(s-u)}e^{X_{u}}
 \left( \rho dW_{u}^{*}+\sqrt{1-\rho ^{2}}dB_{u}^{\ast
}\right),
\end{align*}
as well as the realised volatility
$v_{t} := \sqrt{Y_{t} / (T-t)}$, with
$Y_{t} := \int_{t\vee s}^{T}\sigma_{u}^{2}du$.
Notice that, if $t<s$, $v_{t}\sqrt{T-t}=v_{s}\sqrt{T-s}.$
In order to prove our main results, we consider the following hypotheses,
ensuring both a unique strong solution to~\eqref{eq:stock} 
and regularity of the solution in the Malliavin sense:\\

\noindent \textbf{(H1)} $\sigma$ is bounded below and above almost surely by strictly positive constants.\\
\textbf{(H2)} Both $\sigma$ and $\sigma e^{X}$ belong to $\mathbb{L}^{2,2}\cap \mathbb{L}^{1,4}$.\\

The space~$\mathbb{L}^{p,q}$ is the classical space on which processes are~$q$ times
Malliavin differentiable in~$L^p$.
We refer the reader to~\cite[Section 1.2]{Nualart} for full details.
We are now in a position to prove the following decomposition theorem, 
which identifies the impact of correlation on 
the price of Forward-Start options.
To do so, let us introduce the auxiliary process
$\Lambda_{u}^{W^{\ast}} := \int_{u\vee s}^{T}D_{u}^{W^{*}}\sigma
_{\theta }^{2}d{\theta }$, for $u\in[0,T]$,
where~$D_{u}^{W^{*}}$ denotes the Malliavin derivative with respect to the Brownian motion~$W^*$.
One of the main results of this paper is the following decomposition theorem, 
the proof of which is postponed to Section~\ref{sec:the:hw}.
\begin{theorem}\label{the:hw}
Under~\textbf{(H1)} and~\textbf{(H2)}, for all $0\leq t\leq s\leq T$,
\begin{align*}
V_{t} = \EE^{*}_{t}\Big[ e^{X_{t}}\BS\left( s,0,e^{\alpha},v_{s}\right)
& + \frac{\rho }{2}\int_{s}^{T}e^{-\rha(u-t)}
H(u,X_{u},M_{u},v_{u})\sigma_{u}\Lambda_{u}^{W^{*}}du\\
& + \frac{\rho }{2}G(s,0,e^{\alpha},v_{s})\int_{t}^{s}e^{-\rha(u-t)}e^{X_{u}}\sigma_{u}\Lambda_{u}^{W^{*}}du\Big].
\end{align*}
\end{theorem}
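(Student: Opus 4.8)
The plan is to represent the price as the terminal value of a suitable Black--Scholes functional and then track its drift. Since $\BS(T,x,K,\cdot)=(e^{x}-K)_{+}$ and $M_{T}=e^{\alpha}e^{X_{s}}$, the process $F_{u}:=e^{-\rha(u-t)}\BS(u,X_{u},M_{u},v_{u})$ satisfies $\EE^{*}_{t}[F_{T}]=V_{t}$, so it suffices to decompose $\EE^{*}_{t}[F_{T}]=\EE^{*}_{t}[F_{s}]+\EE^{*}_{t}[F_{T}-F_{s}]$. The two pieces are genuinely different: on $[s,T]$ the strike $M_{u}\equiv e^{\alpha}e^{X_{s}}$ is frozen while the forward variance $Y_{u}=\int_{u}^{T}\sigma_{\theta}^{2}d\theta$ is anticipating, whereas on $[t,s]$ the variance $Y_{u}=Y_{s}$ is frozen in $u$ but the strike forward $M_{u}$ fluctuates. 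I would treat the first piece by an anticipating It\^o formula and the second by homogeneity together with Malliavin duality. Throughout I will use the vega--gamma identity $\Ds\BS=\sigma(T-u)(\partial_{xx}-\Dx)\BS=\sigma(T-u)G$, hence $\Dx\Ds\BS=\sigma(T-u)H$, and the chain rule $D_{u}^{W^{*}}v_{w}=\Lambda_{u}^{W^{*}}/\big(2v_{w}(T-w)\big)$ for $w\ge u$, which follows from $v_{w}^{2}(T-w)=Y_{w}$.

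For the term on $[s,T]$, I would apply the It\^o formula for anticipating (Skorohod) processes, see~\cite{Nualart}, to $F_{u}$. The bounded-variation contributions collect into $\partial_{t}\BS+\Ds\BS\,\mu_{v}+(\rha-\tfrac{\sigma_{u}^{2}}2)\Dx\BS+\tfrac{\sigma_{u}^{2}}2\partial_{xx}\BS$, where $\mu_{v}=\tfrac{v_{u}^{2}-\sigma_{u}^{2}}{2v_{u}(T-u)}$ is the pathwise $u$-derivative of $v_{u}$ obtained from $\partial_{u}Y_{u}=-\sigma_{u}^{2}$. Substituting $\partial_{t}\BS$ from the Black--Scholes equation $\mathcal{L}_{\BS}(v_{u})\BS=0$ and using $\Ds\BS=v_{u}(T-u)G$, all the $\partial_{xx}\BS$ and $\Dx\BS$ coefficients cancel and one is left with $\rha\,\BS$, which is exactly annihilated by the discount factor in $F_{u}$. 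The only surviving drift is the Skorohod-to-forward correction on the $\rho\,dW^{*}$ part of $dX_{u}$: because the integrand $\Dx\BS(u,X_{u},M_{u},v_{u})$ is anticipating through $v_{u}$, taking $\EE^{*}_{t}$ produces $\EE^{*}_{t}\!\int_{s}^{T} e^{-\rha(u-t)}\rho\,\sigma_{u}\,\Dx\Ds\BS\;D_{u}^{W^{*}}v_{u}\,du$, and inserting $\Dx\Ds\BS=v_{u}(T-u)H$ and $D_{u}^{W^{*}}v_{u}=\Lambda_{u}^{W^{*}}/(2v_{u}(T-u))$ yields precisely the second term of the statement, all pure Skorohod integrals having zero conditional mean.

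For $\EE^{*}_{t}[F_{s}]$, I would first use the homogeneity of $\BS$ in $(e^{x},K)$ together with $M_{s}=e^{\alpha}e^{X_{s}}$ to write $F_{s}=e^{-\rha(s-t)}e^{X_{s}}\BS(s,0,e^{\alpha},v_{s})=e^{X_{t}}\,\mathcal{E}_{s}\,\BS(s,0,e^{\alpha},v_{s})$, where $\mathcal{E}_{u}=\exp\big(-\tfrac12\int_{t}^{u}\sigma^{2}+\int_{t}^{u}\sigma(\rho\,dW^{*}+\sqrt{1-\rho^{2}}dB^{*})\big)$ is the stochastic exponential, a conditional $\PP^{*}$-martingale with $\EE^{*}_{t}[\mathcal{E}_{s}]=1$. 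Splitting $\mathcal{E}_{s}=1+(\mathcal{E}_{s}-1)$ gives the leading term $\EE^{*}_{t}[e^{X_{t}}\BS(s,0,e^{\alpha},v_{s})]$, and the remainder is a covariance that I would evaluate by Malliavin duality: since $\mathcal{E}_{s}-1=\int_{t}^{s}\mathcal{E}_{u}\sigma_{u}(\rho\,dW^{*}_{u}+\sqrt{1-\rho^{2}}dB^{*}_{u})$ and $\BS(s,0,e^{\alpha},v_{s})$ depends only on $\sigma$ (so $D^{B^{*}}\BS=0$), the duality formula leaves only the $\rho$-contribution
\[
e^{X_{t}}\EE^{*}_{t}\Big[(\mathcal{E}_{s}-1)\BS(s,0,e^{\alpha},v_{s})\Big]
=\rho\,\EE^{*}_{t}\!\int_{t}^{s} e^{-\rha(u-t)}e^{X_{u}}\sigma_{u}\,D_{u}^{W^{*}}\BS(s,0,e^{\alpha},v_{s})\,du .
\]
Using $D_{u}^{W^{*}}\BS(s,0,e^{\alpha},v_{s})=\Ds\BS\cdot D_{u}^{W^{*}}v_{s}=\tfrac12 G(s,0,e^{\alpha},v_{s})\Lambda_{u}^{W^{*}}$ together with $e^{X_{t}}\mathcal{E}_{u}=e^{-\rha(u-t)}e^{X_{u}}$ reproduces the third term, completing the identity.

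The main obstacle is the rigorous justification of the anticipating It\^o formula up to maturity: the derivatives of $\BS$ blow up as $u\uparrow T$ since $d_{\pm}$ and $G,H$ degenerate when $T-u\to0$. I would therefore run the argument on $[s,T-\varepsilon]$, where $F$ is a smooth functional of Malliavin-differentiable processes, and pass to the limit $\varepsilon\downarrow0$. Hypothesis~\textbf{(H1)} keeps $v_{u}$ bounded away from $0$ and $\infty$, controlling the Gaussian factors in $G$ and $H$ and the negative powers of $T-u$, while \textbf{(H2)} guarantees that $\sigma$ and $\sigma e^{X}$---and hence the integrands $\sigma_{u}\Lambda_{u}^{W^{*}}$ and $e^{X_{u}}\sigma_{u}\Lambda_{u}^{W^{*}}$---lie in the spaces required for both the Skorohod duality and the anticipating It\^o formula, so that all the limiting and integrability steps go through.
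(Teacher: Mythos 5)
Your proposal is correct, and its handling of the interval $[t,s]$ is genuinely different from the paper's. The paper applies the anticipating It\^o formula once, over the whole of $[t,T]$, to $e^{-\rha u}\BS(u,X_u,M_u,v_u)$ with the strike process $M_u$ kept as a moving argument; this generates the cross terms $\partial^{2}_{xK}\BS\,d\langle X,M\rangle_u$ and $\DKK\BS\,d\langle M,M\rangle_u$, which are cancelled via the identities $\partial^{2}_{xK}\BS=-\frac{1}{K}(\Dxx-\Dx)\BS$ and $\DKK\BS=\frac{1}{K^{2}}(\Dxx-\Dx)\BS$, plus two Skorohod corrections on $[t,s]$ (one from $dX$, one from $dM$), which are then recombined into $G$ through the explicit formulas for $\Dx(\Dxx-\Dx)\BS$ and $\DK(\Dxx-\Dx)\BS$, with homogeneity invoked only at the very end. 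Your $[s,T]$ step coincides with the paper's computation (there $M$ is frozen, so no cross terms arise, and your coefficient $\frac{\rho}{2}H\sigma_u\Lambda_u^{W^{*}}$ is exactly the paper's), but on $[t,s]$ you dispense with It\^o entirely: homogeneity at time~$s$, the factorisation $e^{-\rha(s-t)}e^{X_s}=e^{X_t}\mathcal{E}_s$, and a single conditional Malliavin integration by parts produce the $G$-term in one stroke, and make it transparent why only the $\rho$-component survives ($v_s$ is $\mathcal{F}^{W^{*}}$-measurable, so its $D^{B^{*}}$-derivative vanishes). What the paper's single pass buys is uniformity---every term of the decomposition falls out of one formula; what your split buys is that the $[t,s]$ piece requires no second-order $K$-derivatives and no covariation bookkeeping, at the cost of justifying the conditional duality formula ($\ind_{[t,s]}\mathcal{E}\sigma\in\mathrm{Dom}\,\delta$ and $\BS(s,0,e^{\alpha},v_s)\in\mathbb{D}^{1,2}$), which \textbf{(H1)}--\textbf{(H2)} do provide. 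Both arguments ultimately rest on the vega-gamma identity $\Ds\BS=\sigma(T-u)(\Dxx-\Dx)\BS$ and the chain rule for $D^{W^{*}}_u v$, and both need the truncation to $[s,T-\varepsilon]$ (which you rightly flag, and the paper leaves implicit) to control the degeneracy of $G$ and $H$ as $u\uparrow T$.
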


Hypotheses~\textbf{(H1)} and~\textbf{(H2)} are stated here mainly for simplicity,
and the theorem, as can be seen through its proof, still holds under suitable integrability conditions.
In the case $t=s$, the Forward-Start option reduces to a standard European Call option,
and we recover precisely the decomposition formula proved in~\cite{Alos2015}.
If the volatility process is constant, equal to some $\sigma>0$, 
then $v_{s}=\sigma$, $\Lambda ^{W} = 0$ almost surely, 
and, for $t\leq s$,
\begin{equation}\label{TheBScase}
V_{t} = e^{X_{t}}\BS\left( s,0,e^{\alpha},\sigma \right),
\end{equation}
which is the classical Forward-Start option price in Black-Scholes~\cite{Rubinstein, Wilmott}.

\section{At-the-money behaviour of the short-maturity forward smile}\label{sec:ATM}
We now delve into the core of our analysis, and use Theorem~\ref{the:hw} 
to deduce the precise short-maturity behaviour of the at-the-money forward implied volatility smile, 
its skew and its curvature.
For $t\in \lbrack 0,s]$, we define the forward implied volatility $I(t,s;\alpha)$
as the unique non-negative solution to the equation
\begin{equation}\label{elprecio}
V_{t} = e^{X_{t}}\BS\left( s,0,e^{\alpha},I(t,s;\alpha )\right) .
\end{equation}
Obviously, in the constant volatility case $\sigma_{u}=\sigma>0$, so that $I(t,s;\alpha )=\sigma$.
In order to streamline the presentation of the results, we introduce the following quantity, 
which will appear several times:
\begin{equation}\label{eq:Ee}
\Ee_{t,s} := \frac{e^{-X_t}}{2}\EE^{*}_{t}\left( \frac{1}{\sigma_{s}}\int_{t}^{s}e^{-
\rha(u-t)}e^{X_{u}}\sigma_{u}\left(D_{u}^{W^{*}}\sigma_{s}^{2}\right)du\right)
\end{equation}

\subsection{At-the-money smile}
The following theorem, with proof relegated to Section~\ref{sec:el13}, 
provides the short-maturity behaviour for the ATM forward smile.
\begin{theorem}\label{el13}
Under~\textbf{(H1)}-\textbf{(H2)}, for all $t<s$, 
\begin{equation*}
\lim_{T\downarrow s}I(t,s;\alpha ^{*})
 = \EE^{*}_{t}(\sigma_{s}) + \rho \Ee_{t,s}.
\end{equation*}
\end{theorem}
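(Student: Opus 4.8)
The plan is to feed the decomposition of Theorem~\ref{the:hw} into the defining relation~\eqref{elprecio} of the forward implied volatility and then read off the short-maturity behaviour at the at-the-money forward strike $\alpha=\alpha^{*}=\rha(T-s)$. Since $e^{X_{t}}$ is $\mathcal{F}_{t}$-measurable, dividing both~\eqref{elprecio} and the conclusion of Theorem~\ref{the:hw} by $e^{X_{t}}$ gives
\begin{equation*}
\BS\left(s,0,e^{\alpha^{*}},I(t,s;\alpha^{*})\right)
 = \EE^{*}_{t}\!\left[\BS(s,0,e^{\alpha^{*}},v_{s})\right] + R_{1} + R_{2},
\end{equation*}
where $R_{1}$ and $R_{2}$ denote $e^{-X_{t}}$ times the conditional expectation of, respectively, the $H$-integral over $[s,T]$ and the $G$-integral over $[t,s]$ appearing in Theorem~\ref{the:hw}. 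The whole statement then amounts to identifying the $\Oo(\sqrt{T-s})$ coefficient of each of the three terms on the right-hand side.

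The key inputs are the at-the-money short-time expansions of the Black-Scholes functionals. Because $\alpha^{*}=\rha(T-s)$ annihilates the log-moneyness, one has $\BS(s,0,e^{\alpha^{*}},\sigma)=2\Nn\!\big(\tfrac12\sigma\sqrt{T-s}\big)-1$, whence
\begin{equation*}
\BS(s,0,e^{\alpha^{*}},\sigma)=\frac{\sigma\sqrt{T-s}}{\sqrt{2\pi}}\left(1+\Oo(T-s)\right),
\end{equation*}
uniformly over the compact range of $\sigma$ allowed by~\textbf{(H1)}. I would then invoke the classical Black-Scholes identities $\Ds\BS=\sigma(T-u)\,G$ and $\Dx\Ds\BS=\sigma(T-u)\,H$, together with the behaviour of Vega and Vanna at the money, to obtain $G(s,0,e^{\alpha^{*}},v_{s})=\frac{1}{v_{s}\sqrt{2\pi(T-s)}}\left(1+\Oo(T-s)\right)$ and the bound $H(u,X_{u},M_{u},v_{u})=\Oo\big((T-u)^{-1/2}\big)$ at the forward at-the-money level (for $u\geq s$ the strike $M_{u}=e^{\alpha^{*}}e^{X_{s}}$ places the option near, though not exactly at, the money, the instantaneous moneyness drifting only by $\Oo(\sqrt{u-s})$). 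Applying the first expansion to $I$ itself yields the at-the-money inversion $I(t,s;\alpha^{*})=\frac{\sqrt{2\pi}}{\sqrt{T-s}}\,\BS(s,0,e^{\alpha^{*}},I(t,s;\alpha^{*}))\,(1+\Oo(T-s))$, once $I$ is known to remain bounded as $T\downarrow s$ (which follows by sandwiching $V_{t}$ between Black-Scholes prices with the constant bounds of~\textbf{(H1)}).

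It then remains to order the three terms after multiplication by $\sqrt{2\pi}/\sqrt{T-s}$. Since $v_{s}^{2}=\frac{1}{T-s}\int_{s}^{T}\sigma_{\theta}^{2}d\theta\to\sigma_{s}^{2}$ almost surely as $T\downarrow s$ by continuity of $\sigma$, bounded convergence gives that the first term contributes $\EE^{*}_{t}(\sigma_{s})$. For $R_{2}$ I would write $\Lambda_{u}^{W^{*}}=\int_{s}^{T}D_{u}^{W^{*}}\sigma_{\theta}^{2}d\theta=(T-s)\,D_{u}^{W^{*}}\sigma_{s}^{2}\,(1+o(1))$ for $u\in[t,s]$, combine this with $G\sim(v_{s}\sqrt{2\pi(T-s)})^{-1}$ and $v_{s}\to\sigma_{s}$, and recognise the limit
\begin{equation*}
\frac{\sqrt{2\pi}}{\sqrt{T-s}}\,R_{2}\longrightarrow
\rho\,\frac{e^{-X_{t}}}{2}\,\EE^{*}_{t}\!\left(\frac{1}{\sigma_{s}}\int_{t}^{s}e^{-\rha(u-t)}e^{X_{u}}\sigma_{u}\,(D_{u}^{W^{*}}\sigma_{s}^{2})\,du\right)=\rho\,\Ee_{t,s},
\end{equation*}
by comparison with~\eqref{eq:Ee}. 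Finally, for $u\geq s$ one has $\Lambda_{u}^{W^{*}}=\Oo(T-u)$, so the $H$-integrand is of conditional-mean order $(T-u)^{1/2}$, the off-the-money drift contributing at the same order; integrating over the shrinking interval $[s,T]$ yields $R_{1}=\Oo((T-s)^{3/2})$, which disappears after normalisation. Letting $T\downarrow s$ then produces the announced limit.

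The delicate point throughout will be to turn these pointwise-in-$\omega$ expansions into genuine limits under the conditional expectation and the time integrals. This is exactly where the hypotheses enter:~\textbf{(H1)} keeps $\sigma$, and hence each $v_{s}$ and $v_{u}$, inside a fixed interval bounded away from $0$ and $\infty$, so that the remainders in the Greek expansions are uniformly controlled and bounded convergence applies; while~\textbf{(H2)}, through the $\mathbb{L}^{2,2}\cap\mathbb{L}^{1,4}$ integrability of $\sigma$ and $\sigma e^{X}$, dominates the Malliavin derivatives $D_{u}^{W^{*}}\sigma_{\theta}^{2}$ and legitimises both the replacement $\Lambda_{u}^{W^{*}}\approx(T-s)D_{u}^{W^{*}}\sigma_{s}^{2}$ and the passage $v_{s}\to\sigma_{s}$ inside the expectations defining $R_{2}$. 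The hardest estimate is the negligibility of $R_{1}$, since $H$ is evaluated slightly off the money for $u>s$ and one must show that the resulting drift of the instantaneous log-moneyness does not spoil the $\Oo((T-s)^{3/2})$ order.
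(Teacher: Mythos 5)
Your proposal is correct in substance and shares the paper's skeleton---feed Theorem~\ref{the:hw} into~\eqref{elprecio}, normalise by the at-the-money vega scale $\sqrt{T-s}/\sqrt{2\pi}$, show the $H$-term over $[s,T]$ dies and the $G$-term over $[t,s]$ produces $\rho\,\Ee_{t,s}$---but your handling of the two technical pillars is genuinely different. Where the paper proves the leading term $\BS^{-1}\circ\EE^{*}_{t}\left[\BS\left(s,0,e^{\alpha^{*}},v_{s}\right)\right]\to\EE^{*}_{t}(\sigma_{s})$ via Lemma~\ref{lem:limimp}, i.e.\ Clark--Ocone plus It\^o's formula applied to $\BS^{-1}$ of the martingale $\EE^{*}_{u}\left[\BS\left(s,0,e^{\alpha^{*}},v_{s}\right)\right]$, so that the Jensen gap appears as an explicit $(\BS^{-1})''U^{2}$ integral shown to vanish, you linearise directly: under \textbf{(H1)} the expansion $\BS\left(s,0,e^{\alpha^{*}},\sigma\right)=2\Nn\left(\tfrac{1}{2}\sigma\sqrt{T-s}\right)-1=\frac{\sigma\sqrt{T-s}}{\sqrt{2\pi}}\left(1+\Oo(T-s)\right)$ is uniform on the compact range of $v_{s}$, so the convexity correction is automatically $\Oo(T-s)$ and no Malliavin representation is needed for this term. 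That is a real simplification, legitimate precisely because \textbf{(H1)} is in force; the paper's route is heavier but its Lemmas~\ref{lem:limimp} and~\ref{lemaconJosep} are reused in the skew and curvature proofs, which is what that investment buys.

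Two overstatements in your argument need repair, though neither is fatal. First, the estimate $H=\Oo\left((T-u)^{-1/2}\right)$ is an exactly-at-the-money bound and fails pointwise for $u>s$, where the log-moneyness $X_{u}-X_{s}-\rha(u-s)$ is of order $\sqrt{u-s}$, so that $d_{+}/\left(v_{u}\sqrt{T-u}\right)$ can be of order $\sqrt{u-s}/(T-u)$. The robust pointwise bound, obtained from $|z|\Nn'(z)\leq C$ and \textbf{(H1)}, is $|H(u,X_{u},M_{u},v_{u})|\leq C e^{X_{u}}v_{u}^{-2}(T-u)^{-1}$, which together with $\Lambda_{u}^{W^{*}}=\Oo(T-u)$ yields only $R_{1}=\Oo(T-s)$, not $\Oo\left((T-s)^{3/2}\right)$; fortunately $\Oo(T-s)$ still vanishes after division by $\sqrt{T-s}$, so your conclusion survives. (This off-the-money difficulty is exactly what the paper's Lemma~\ref{lemaconJosep} absorbs, by conditioning on $\mathcal{F}_{t}\vee\mathcal{F}_{T}^{W^{*}}$ and letting the expectation over $B^{*}$ smooth the Gaussian factor.) Second, the boundedness of $I(t,s;\alpha^{*})$ cannot be obtained by sandwiching $V_{t}$ between constant-volatility Black--Scholes prices, since monotonicity of the price in the volatility path is not available in correlated models; but it follows at once from your own decomposition: $\BS\left(s,0,e^{\alpha^{*}},I(t,s;\alpha^{*})\right)=\Oo(\sqrt{T-s})$ forces $I(t,s;\alpha^{*})\sqrt{T-s}\to 0$, and then the same ATM expansion gives $I(t,s;\alpha^{*})=\Oo(1)$.
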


Perhaps not surprisingly, in the uncorrelated case, the short-maturity ATM smile is given
uniquely as the expectation of the future instantaneous volatility.
The correlation parameter acts as a correction term around this level.
In the Vanilla case $s=t$, the theorem recovers exactly the behaviour proved by Durrleman~\cite{Durrleman}. 
For practical purposes, the integral and the expectation can be computed in most (continuous Markovian) models
used in financial practice. 
We shall highlight in Section~\ref{sec:Example} how this looks like exactly in the case of the Stein-Stein stochastic volatility model.

\subsection{At-the-money skew}
The at-the-money level of the forward smile is directly observable.
From a trader's point of view, the skew, i.e. the derivative of the implied volatility with respect to
the (log) moneyness, is a key tool indicating the level of the Put-Call asymmetry.
Consider now the additional hypothesis regarding the regularity of the volatility process:\\

\textbf{(H3)} There exist $C>0$, $\delta \geq 0$, such that, for all $t\leq \theta <u<r$,
$$
\EE^{*}_{t}\left[\left( D_{u}^{W^{*}}\sigma_{r}\right)^{2}\right] \leq C(r-u)^{2\delta}
\quad\text{and}\quad
\EE^{*}_{t}\left[\left( D_{u}^{W^{*}}D_{u}^{W^{*}}\sigma_{r}\right) ^{2}\right]
\leq C\left(\frac{r-u}{r-\theta}\right)^{2\delta}.
$$
The small-maturity at-the-money forward skew is provided in the following theorem, 
proved in Section~\ref{sec:the:pdi}.
\begin{theorem}\label{the:pdi}
Under~\textbf{(H1)}-\textbf{(H2)}-\textbf{(H3)}, assume that there exists a $\mathcal{F}_{s}$-measurable
random variable $\overline{\sigma}_{s}$ such that 
\begin{equation}\label{condition}
\lim_{T\downarrow s}E_{s}^{*}\left[\sup_{s\leq u\leq \theta \leq T}
\EE^{*}_{u}\left(D_{u}^{W^{*}}\sigma_{\theta}^{2} - \overline{\sigma}_s^{2}\right)^{4}\right] = 0.
\end{equation}
Then, for all $s<t$, 
$$
\lim_{T\downarrow s}\frac{\partial I}{\partial \alpha }(t,s;\alpha^{*})
 = {\frac{\rho e^{-\rha(s-t)}}{4e^{X_{t}}}\EE^{*}_{t}
}\left( e^{X_s}{\frac{\overline{\sigma}_{s}^{2}}{\sigma_{s}^{2}}}\right).
$$
\end{theorem}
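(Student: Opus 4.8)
The plan is to differentiate the defining relation~\eqref{elprecio} implicitly and then insert the decomposition of Theorem~\ref{the:hw}. Since both the strike $e^{\alpha}$ and the strike-adjusted forward $M_{u}$ carry the $\alpha$-dependence (note that $\partial_{\alpha}M_{u}=M_{u}$), differentiating~\eqref{elprecio} at $\alpha=\alpha^{*}$ and solving for the volatility derivative yields
\begin{equation*}
\frac{\partial I}{\partial\alpha}(t,s;\alpha^{*})
=\frac{e^{-X_{t}}\,\partial_{\alpha}V_{t}-e^{\alpha^{*}}\DK\BS(s,0,e^{\alpha^{*}},I)}{\Ds\BS(s,0,e^{\alpha^{*}},I)}.
\end{equation*}
First I would record the Black--Scholes identities the argument rests on: the vega--gamma relation $\Ds\BS(s,0,K,\sigma)=\sigma(T-s)\,G(s,0,K,\sigma)$, so that the denominator is equivalent to $(2\pi)^{-1/2}\sqrt{T-s}$ at the money; and the degree-one homogeneity of $\BS$ in $(e^{x},K)$, which, after differentiating $K\DK\BS=\BS-\Dx\BS$ repeatedly in $x$, gives $K\DK(\partial_x^{n}\BS)=(\partial_x^{n}-\partial_x^{n+1})\BS$. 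In particular $e^{\alpha}\DK\BS=\BS-\Dx\BS$, $\;e^{\alpha}\DK G=G-H$, and $M_{u}\DK H=H-\Dx H$, which is exactly what turns each $\alpha$-derivative of the decomposition into a combination of the functions already appearing in it.

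Next I would compute $\partial_{\alpha}V_{t}$ from Theorem~\ref{the:hw}, producing three pieces: a leading piece from $\EE^{*}_{t}[e^{X_{t}}\BS(s,0,e^{\alpha},v_{s})]$, an $H$-piece $\tfrac{\rho}{2}\EE^{*}_{t}\int_{s}^{T}e^{-\rha(u-t)}(H-\Dx H)(u,X_{u},M_{u},v_{u})\sigma_{u}\Lambda_{u}^{W^{*}}\,du$, and a $G$-piece $\tfrac{\rho}{2}\EE^{*}_{t}[(G-H)(s,0,e^{\alpha},v_{s})\int_{t}^{s}e^{-\rha(u-t)}e^{X_{u}}\sigma_{u}\Lambda_{u}^{W^{*}}\,du]$. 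After dividing by $e^{X_{t}}$, the numerator splits as $N_{0}+N_{H}+N_{G}$, where $N_{0}=e^{\alpha^{*}}\{\EE^{*}_{t}[\DK\BS(s,0,e^{\alpha^{*}},v_{s})]-\DK\BS(s,0,e^{\alpha^{*}},I)\}$ is a ``level'' mismatch and $N_{H},N_{G}$ come from the two correlation integrals.

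I expect $N_{0}$ and $N_{G}$ to cancel after division by the vega. Using $\DK\BS=-e^{-\rha(T-s)}\Nn(d_{-})$ and a first-order expansion of $\Nn$ around the money, $N_{0}$ is of order $\sqrt{T-s}\,(\EE^{*}_{t}(v_{s})-I)$; by Theorem~\ref{el13} one has $\EE^{*}_{t}(v_{s})-I\to-\rho\,\Ee_{t,s}$ as $T\downarrow s$, so $N_{0}$ contributes $-\tfrac{\rho}{2}\Ee_{t,s}$. On the other side, $(G-H)(s,0,e^{\alpha^{*}},v_{s})\to\tfrac12 G$ at the money, while for $u\le s$ one has $\Lambda_{u}^{W^{*}}=\int_{s}^{T}D_{u}^{W^{*}}\sigma_{\theta}^{2}\,d\theta\sim(T-s)\,D_{u}^{W^{*}}\sigma_{s}^{2}$, so $N_{G}$ reconstitutes exactly the quantity $\Ee_{t,s}$ of~\eqref{eq:Ee} and contributes $+\tfrac{\rho}{2}\Ee_{t,s}$. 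Hence $N_{0}+N_{G}\to 0$, which is both a simplification and a consistency check against Theorem~\ref{el13}, and only the $H$-piece survives.

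The remaining evaluation of $N_{H}$ is the delicate step. At the money $H-\Dx H=\tfrac14 G+G/(\sigma^{2}(T-u))$, dominated by the second term, of singular order $(T-u)^{-3/2}$, so one may \emph{not} simply freeze $X_{u}=X_{s}$ inside it. Instead I would condition on $\mathcal F_{s}$ and use that each $\partial_x^{n}\BS(\cdot,\cdot,K,\sigma_{s})$ solves the Black--Scholes PDE, whence $e^{-\rha u}\partial_x^{n}\BS(u,Z_{u},K,\sigma_{s})$ is a martingale for $Z$ the Black--Scholes log-price with constant volatility $\sigma_{s}$ started at $X_{s}$. This ``heals'' the singularity, replacing the maturity $T-u$ by $T-s$: to leading order $\EE^{*}_{s}[(H-\Dx H)(u,X_{u},M_{u},v_{u})]\approx e^{\rha(u-s)}(H-\Dx H)(s,X_{s},e^{\alpha^{*}}e^{X_{s}},\sigma_{s})$, which is genuinely at the money and of the \emph{constant} order $(T-s)^{-3/2}$. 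Replacing $\Lambda_{u}^{W^{*}}$ by $\overline{\sigma}_{s}^{2}(T-u)$ via~\eqref{condition} and $\sigma_{u},v_{u}$ by $\sigma_{s}$, the $u$-integral becomes $(T-s)^{-3/2}\int_{s}^{T}(T-u)\,du=\tfrac12\sqrt{T-s}$, and dividing by the vega yields precisely the factor $\tfrac14$ in the statement. The main obstacle is the rigorous justification of this healing: controlling the difference between $X$ and its frozen-volatility proxy $Z$, and between the random weights $\sigma_{u},v_{u},\Lambda_{u}^{W^{*}}$ and their limits, uniformly against the $(T-u)^{-3/2}$ singularity. This is exactly where Hypotheses~\textbf{(H1)}--\textbf{(H3)} and the uniform $L^{4}$ convergence~\eqref{condition} enter, through Hölder's inequality and the Malliavin moment bounds they grant.
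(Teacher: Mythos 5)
Your proposal is correct, and its architecture is essentially that of the paper: differentiate~\eqref{elprecio} implicitly, insert the decomposition of Theorem~\ref{the:hw} to split the skew into a level-mismatch term, an $H$-term and a $G$-term, show that the first and third cancel, and extract the limit from the surviving singular $H$-term, whose dominant part $G/(\sigma^{2}(T-u))$ yields the factor $\rho/4$ exactly as you compute. The two places where you deviate are local. For the cancellation, the paper does not compute the two limits separately: it uses the exact at-the-money identities $\Dk\BS(s,0,e^{\alpha^{*}},\sigma)=\frac{1}{2}\left(\BS(s,0,e^{\alpha^{*}},\sigma)-1\right)$ and $G(s,0,e^{\alpha^{*}},v_{s})=2\,\Dk G(s,0,e^{\alpha^{*}},v_{s})$, which, after a second application of Theorem~\ref{the:hw}, make the $G$-part of the level mismatch \emph{identically} equal to $-T_{3}$; what remains is an $H$-type term that vanishes by Lemma~\ref{lemaconJosep} under \textbf{(H3)}. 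Your asymptotic version (the two pieces tend to $\mp\frac{\rho}{2}\Ee_{t,s}$, using Theorem~\ref{el13} together with $\Lambda_{u}^{W^{*}}\sim(T-s)D_{u}^{W^{*}}\sigma_{s}^{2}$ for $u\leq s$) reaches the same conclusion and doubles as a consistency check against Theorem~\ref{el13}, at the price of requiring each piece to converge on its own. For the surviving term, the paper collapses $\int_{s}^{T}e^{-\rha(u-t)}\Dk H(u,X_{u},M_{u},v_{u})\sigma_{u}\Lambda_{u}^{W^{*}}du$ onto $u=s$ by a further application of the anticipating It\^{o} formula combined with Lemma~\ref{lemaconJosep}, rather than by comparison with a frozen-volatility Black--Scholes martingale; your ``healing'' argument is the same mechanism in disguise (the Black--Scholes operator annihilates the drift of $e^{-\rha u}\Dk H$), and the obstacle you flag---controlling the random weights $\sigma_{u},v_{u},\Lambda_{u}^{W^{*}}$ against the $(T-u)^{-3/2}$ singularity---is exactly what Lemma~\ref{lemaconJosep}, \textbf{(H1)}--\textbf{(H3)} and~\eqref{condition} are invoked for at that point of the paper's proof, which is itself terse there. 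In short: same decomposition, same surviving term, same constant; only the cancellation device and the localisation-at-$u=s$ device differ.
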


In the Vanilla case $t=s,$ this formula agrees with the at-the-money short-time limit skew proved in~\cite{alv07}.
Before proving the theorem, let us state the following, short yet useful, result:

\subsection{At-the-money curvature}
We now concentrate our attention to the at-the-money curvature $\frac{\partial ^{2}I}{\partial \alpha ^{2}}(t,s;\alpha ^{*})$ of the forward smile.


\begin{theorem}\label{el14}
Under~\textbf{(H1)} and~\textbf{(H2)}, for all $0\leq t\leq s\leq T$,
\begin{align*}
\lim_{T\downarrow s}(T-s)\frac{\partial ^{2}I}{\partial \alpha ^{2}}(t,s;\alpha ^{*}) 
 & = \frac{1}{4}\EE^{*}_{t}\left[
 \int_{t}^{s}\EE^{*}_{u}\left(\frac{D_{u}^{W^{*}}\sigma_{s}^{2}}{\sigma_{s}}\right)^{2}
 \frac{du}{\EE^{*}_{u}(\sigma_{s})^{3}}\right] \\
  & + \frac{1}{\EE^{*}_{t}(\sigma_{s})}
 - \frac{1}{\EE^{*}_{t}(\sigma_{s}) + \rho \Ee_{t,s}} \\
 & -\frac{\rho }{2}e^{-X_{t}}\EE^{*}_{t}\left( \frac{1}{\sigma_{s}^{3}}
\int_{t}^{s}e^{-\rha(u-t)}e^{X_{u}}\sigma_{u}\left( D_{u}^{W^{*}}\sigma
_{s}^{2}\right) du\right).
\end{align*}
\end{theorem}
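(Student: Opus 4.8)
The plan is to start from the defining relation~\eqref{elprecio} and differentiate it twice in the log-moneyness~$\alpha$, bearing in mind that $\alpha$ enters both explicitly through the strike $K=e^{\alpha}$ and implicitly through $I(\alpha)$. Writing $P(\alpha):=e^{-X_t}V_t$ for the normalised forward price and treating the volatility slot of~$\BS$ as a free variable, two applications of the chain rule to $P(\alpha)=\BS(s,0,e^{\alpha},I(t,s;\alpha))$ and solving for the second derivative give
\begin{equation*}
\frac{\partial^{2}I}{\partial\alpha^{2}}
=\frac{1}{\Ds\BS}\left(P''(\alpha)-\partial^{2}_{\alpha\alpha}\BS
-2\,\partial^{2}_{\alpha\sigma}\BS\,\frac{\partial I}{\partial\alpha}
-\Dss\BS\left(\frac{\partial I}{\partial\alpha}\right)^{2}\right),
\end{equation*}
all $\BS$-derivatives being evaluated at $(s,0,e^{\alpha},I)$, with $\partial_{\alpha}\BS=e^{\alpha}\DK\BS$ and $\partial^{2}_{\alpha\alpha}\BS=e^{\alpha}\DK\BS+e^{2\alpha}\DKK\BS$. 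Since the skew $\partial_{\alpha}I(t,s;\alpha^{*})$ is already pinned down by Theorem~\ref{the:pdi}, the task reduces to the short-maturity analysis of $P''(\alpha^{*})$ and of the Black--Scholes sensitivities at the at-the-money-forward point, where $d_{\pm}=\pm\tfrac12 I\sqrt{T-s}$. Note that the curvature blows up like $(T-s)^{-1}$, which is why the statement is normalised by the factor~$(T-s)$.

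To evaluate $P''$ I would differentiate the decomposition of Theorem~\ref{the:hw} twice in~$\alpha$. Because $M_u$ carries a factor~$e^{\alpha}$ and so satisfies $\partial_{\alpha}M_u=M_u$, while $\alpha$ also appears as the explicit strike $e^{\alpha}$ in $\BS(s,0,e^{\alpha},v_s)$ and $G(s,0,e^{\alpha},v_s)$, each $\alpha$-derivative merely turns the strike derivatives of $\BS$, $G$ and $H$ into themselves weighted by $e^{\alpha}$ or $M_u$. This produces three families of contributions: (i) a leading term $\EE^*_t[e^{X_t}\,\partial^{2}_{\alpha\alpha}\BS(s,0,e^{\alpha},v_s)]$; (ii) a $\rho$-integral over $[t,s]$ built from the strike derivatives of $G(s,0,e^{\alpha},v_s)$ and from $\Lambda^{W^{*}}_u$; and (iii) a $\rho$-integral over $[s,T]$ built from the strike derivatives of $H(u,X_u,M_u,v_u)$ and $\Lambda^{W^{*}}_u$. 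I would then substitute the short-maturity equivalents $\Lambda^{W^{*}}_u\sim(T-s)\,D_u^{W^{*}}\sigma_s^{2}$ for $u<s$, $v_s\to\sigma_s$ and $M_u\to e^{\alpha}e^{X_u}e^{\rha(s-u)}$, and keep only the leading order in $(T-s)$.

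The four terms of the statement then arise as follows. The correlation-independent Term~$\tfrac14\EE^*_t[\int_t^s(\ldots)^2\,du/\EE^*_u(\sigma_s)^3]$ comes from~(i): it is the convexity contribution generated by the conditional fluctuations of the terminal realised variance, the Clark--Ocone representation expressing this variance as $\EE^*_t\int_t^s(\EE^*_u D_u^{W^{*}}\,\cdot\,)^2\,du$ (only the $W^*$-direction appearing since $\sigma$ is $\mathcal F^{W^*}$-adapted), with $D_u^{W^{*}}v_s^2\to D_u^{W^{*}}\sigma_s^2$ and the $\sigma_s^{-1}$ and $\EE^*_u(\sigma_s)^{-3}$ weights supplied by the chain rule $v_s^2\mapsto v_s\mapsto I$ and by the ATM vega/volga normalisation. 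The reciprocal difference $\frac{1}{\EE^*_t(\sigma_s)}-\frac{1}{\EE^*_t(\sigma_s)+\rho\Ee_{t,s}}$ emerges when the level-dependent normalising factors in the formula of the first paragraph are expanded around the two relevant volatility values: the uncorrelated level $\EE^*_t(\sigma_s)$ natural to the leading term, and the true level $\EE^*_t(\sigma_s)+\rho\Ee_{t,s}$ given by Theorem~\ref{el13} (this piece vanishes when $\rho=0$, as it should). Finally, the $\rho$-integral over $[t,s]$ in~(ii), normalised by the volga, yields the last term $-\frac{\rho}{2}e^{-X_t}\EE^*_t(\sigma_s^{-3}\int_t^s\cdots)$, whose $\sigma_s^{-3}$ weight (against the $\sigma_s^{-1}$ of~$\Ee_{t,s}$) reflects the extra differentiation.

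The main obstacle is the simultaneous bookkeeping of competing $(T-s)$-orders: vega is of order $\sqrt{T-s}$, the volga of order $(T-s)^{3/2}$, $\Lambda^{W^{*}}_u$ of order $(T-s)$, and $\partial^{2}_{\alpha\alpha}I$ itself of order $(T-s)^{-1}$, so that each Greek ratio individually explodes or vanishes and the finite limit only appears in combination. One must therefore compute the ATM-forward asymptotics of vega, volga and the strike derivatives to sufficient precision, show that the subleading corrections drop out after multiplication by $(T-s)$, and in particular establish that the $[s,T]$-integral~(iii) contributes nothing at this order while the $[t,s]$-integral~(ii) produces exactly the fourth term. The second, more technical obstacle is to justify exchanging limit, expectation and differentiation, and to replace $v_s$, $M_u$ and $\Lambda^{W^{*}}_u$ by their small-maturity limits inside $\EE^*_t$: this is precisely where \eqref{condition} together with \textbf{(H1)}--\textbf{(H3)} supply the uniform integrability needed to pass $T\downarrow s$ through the expectation.
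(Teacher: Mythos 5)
Your proposal follows essentially the same route as the paper's proof: differentiate the pricing relation~\eqref{elprecio} twice, substitute the decomposition of Theorem~\ref{the:hw}, and identify the four limiting contributions exactly as the paper does in its $T_{1},\dots,T_{6}$ splitting --- the convexity term over $[t,s]$ via a Clark--Ocone/It\^o argument (the paper's Lemma~\ref{lem:aux8}), the reciprocal difference by comparing the uncorrelated level $I^{0}$ with the true level from Theorem~\ref{el13}, the last term from the strike derivatives of $G$ on $[t,s]$, and the vanishing of the $[s,T]$ terms. The obstacles you flag (order bookkeeping in $(T-s)$, vanishing of the $[s,T]$ integral, interchange of limits and expectations) are precisely the ones the paper resolves with Lemmas~\ref{lem:limimp}, \ref{lemaconJosep} and~\ref{lem:aux8}, so your plan is a faithful blueprint of the published argument.
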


\section{Example: the extended Stein-Stein model}\label{sec:Example}
The above formulae for the at-the-money forward implied volatility level, skew and curvature
look daunting. 
However, we emphasise here, through an example widely used in practice, 
that they are in fact fully explicit.
Consider a stochastic volatility model of the form $\sigma=f(Y)$, 
for some $f\in\mathcal{C}^{1,2}(\RR)$ where~$Y$ is an Ornstein-Uhlenbeck process of the form 
\begin{equation}
\label{Stein}
dY_t=\kappa (m-Y_t)dt + \lambda dW_t^*,
\qquad Y_0 \in \RR,
\end{equation}
for some positive constants~$\kappa$,~$m$ and~$\lambda$.
When $\sigma(y)\equiv y$, this model is nothing else than the Stein-Stein stochastic volatility model~\cite{Stein}, 
and the function~$f(\cdot)$ allows for greater flexibility.
For any $t\leq s$, we can then write
\begin{align*}
Y_s & = Y_t e^{-\kappa (s-t)} + m\left(1-e^{-\kappa (s-t)}\right)
 + \lambda\int_t^s e^{-\kappa (s-r)} dW_r^*\\
 & =: g(t,s) + \lambda\int_t^s e^{-\kappa (s-r)} dW_r^*,
\end{align*}
from which we deduce $D_t^{W^*}Y_s = \lambda e^{-\kappa (s-t)}$.
Theorem~\ref{el13} then applies with 
$$
\Ee_{t,s} = \lambda e^{-X_t}\EE^{*}_{t}\left[f'(Y_s)\int_{t}^{s}e^{-\rha(u-t)}e^{X_{u}}f(Y_u) e^{-\kappa (s-u)}du\right].
$$
On the other hand, Theorems~\ref{the:pdi} and~\ref{el14} yield
\begin{align*}
\lim_{T\downarrow s}\frac{\partial I}{\partial \alpha }(t,s;\alpha^{*})
 &  = \frac{\rho \lambda e^{-\rha(s-t)}}{2e^{X_{t}}} \EE^{*}_{t}\left(\frac{ e^{X_s}f'(Y_s)}{f(Y_s)}\right)\\
\lim_{T\downarrow s}(T-s)\frac{\partial ^{2}I}{\partial \alpha ^{2}}(t,s;\alpha ^{*}) 
 & = \lambda^2\EE^{*}_{t}\left(\int_{t}^{s}e^{-2\kappa(s-u)}
\frac{\EE^{*}_{u}\left( f'(Y_s)\right)^{2}}{\EE^{*}_{u}\left(
f(Y_s)\right)^{3}}du\right) \\
 & +\frac{1}{\EE^{*}_{t}\left( f(Y_s)\right) }
 - \frac{1}{\EE^{*}_{t}\left( f(Y_s)\right) + \rho \Ee_{t,s}}
 - \rho \EE^{*}_{t}\left( \frac{\Ee_{t,s}}{f(Y_s)^{2}}\right).
\end{align*}
In the standard Stein-Stein case, where $f(y)\equiv y$ (this function does not exactly satisfy the regularity assumptions,
but we assume it for practical purposes in this example), we can make these computations more explicit, as
$f'(y) = 1$, 
\begin{align*}
\Ee_{t,s} & = \lambda e^{-X_t}\EE^{*}_{t}\left[\int_{t}^{s}e^{-\rha(u-t)}e^{X_{u}} Y_u e^{-\kappa (s-u)}du\right],\\
\EE_t^*(f(Y_s)) & = \int_{\RR} f\left(z\lambda\sqrt{\int_s^t e^{-2\kappa (s-r)}dr}+g(t,s)\right)\phi (z)dz = g(t,s),
\end{align*}
where~$\phi$ is the Gaussian density, so that
\begin{align*}
\lim_{T\downarrow s}I(t,s;\alpha ^{*})
& = g(t,s) + \rho\Ee_{t,s},\\
\lim_{T\downarrow s}\frac{\partial I}{\partial \alpha }(t,s;\alpha^{*})
 &  = \frac{\rho \lambda e^{-\rha(s-t)}}{2e^{X_{t}}} \EE^{*}_{t}\left(\frac{ e^{X_s}}{Y_s}\right),\\
\lim_{T\downarrow s}(T-s)\frac{\partial ^{2}I}{\partial \alpha ^{2}}(t,s;\alpha ^{*}) 
 & = \lambda^2\EE^{*}_{t}\left(\int_{t}^{s}\frac{e^{-2\kappa(s-u)}}{g(u,s)^{3}}du\right) 
 +\frac{1}{g(t,s)}
 - \frac{1}{g(t,s) + \rho \Ee_{t,s}}
 - \rho \EE^{*}_{t}\left(\frac{\Ee_{t,s}}{Y_s^{2}}\right).
\end{align*}

\appendix
\section{Proofs}\label{sec:Proofs}
\subsection{Proof of Theorem~\ref{the:hw}}\label{sec:the:hw}
Since the process~$M$ is a martingale, we can write
$$
V_{t} = e^{-\rha\left( T-t\right) }\EE^{*}_{t}\left( e^{X_{T}}-e^{\alpha}e^{X_{s}}\right)_{+}
 = e^{-\rha\left( T-t\right) }\EE^{*}_{t}\left( e^{X_{T}}-M_{T}\right)_{+}
 = e^{-\rha(T-t) }\BSB_T,
$$
where we denote $\BSB_T:=\BS(T,X_{T},M_{T},\nu_{T})$ for clarity in the next few pages.
The anticipating It\^{o}'s formula for the Skorohod integral~\cite[Theorem 3.2.2]{Nualart}~yields
\begin{align*}
 & \EE^{*}_{t}\left(e^{-\rha T}\BS(T,X_{T},M_{T},v_{T})\right) \\
 & = \EE^{*}_{t}\Big[ e^{-\rha t}\BSB_{t}-\rha
\int_{t}^{T}e^{-\rha u}\BSB_{u}du
 + \int_{t}^{T}e^{-\rha u}\partial_{u}\BSB_{u}du 
+\int_{t}^{T}e^{-\rha u}\partial_{\sigma}\BSB_{u}\left(\rha-\frac{\sigma_{u}^{2}}{2}\right) du \\
 & + \frac{1}{2}\int_{t}^{T}e^{-\rha u}\partial^{2}_{xx}\BSB_{u}\sigma_{u}^{2}du 
+\int_{t}^{T}e^{-\rha u}\partial^{2}_{xK}\BSB_{u}d\langle X,M\rangle_{u} \\
 & + \frac{1}{2}\int_{t}^{T}e^{-\rha u}\partial^{2}_{KK}\BSB_{u} d\langle M,M\rangle_{u}
 + \frac{1}{2}\int_{t}^{T}e^{-\rha u}\left(\Dxx-\Dx\right) \BSB_{u}
\left(v_{u}^{2}-\sigma_{u}^{2}\ind_{\left] s,T\right] }(u)\right) du \\
 & + \frac{1}{2}\int_{t}^{T}e^{-\rha u}\frac{\partial }{\partial x}\left( \Dxx-\Dx\right)
\BSB_{u}\sigma_{u}\rho \Lambda_{u}^{W^{*}}du \\
 & +\frac{1}{2}\int_{t}^{T}e^{-\rha u}\DK
\left(\Dxx-\Dx\right)\BSB_{u}\sigma_{u}e^{\alpha}e^{\rha
(s-u)}e^{X_{u}}\rho \Lambda_{u}^{W^{*}}\ind_{\left[ 0,s\right] }(u)du
\Big] .
\end{align*}
That is, since $t\leq s$, 
\begin{align*}
V_{t} & = \EE^{*}_{t}\Big[ \BS(t,X_{t},M_{t},v_{t})
+\int_{t}^{T}e^{-\rha\left( u-t\right) }\mathcal{L}_{\BS}\left(
v_{u}\right) \BSB_{u}du \\
 & + \frac{1}{2}\int_{t}^{T}e^{-\rha\left( u-t\right) }\left(\Dxx-\Dx\right)
\BSB_{u}\left(\sigma_{u}^{2}-v_{u}^{2}\right) du
 + \int_{t}^{T}e^{-\rha\left( u-t\right)}\partial^{2}_{xK}\BSB_{u}d\langle X,M\rangle_{u} \\
 & + \frac{1}{2}\int_{t}^{T}e^{-\rha\left( u-t\right) }\partial^{2}_{KK}\BSB_{u})d\langle M,M\rangle_{u}
  + \frac{1}{2}\int_{t}^{T}e^{-\rha\left( u-t\right) }\left(\Dxx-\Dx\right)
\BSB_{u}\left( v_{u}^{2}-\sigma_{u}^{2}\ind_{\left] s,T\right]}(u)\right) du \\
 & + \frac{1}{2}\int_{t}^{T}e^{-\rha\left( u-t\right) }\Dx\left(\Dxx-\Dx\right) \BSB_{u}\sigma_{u}\rho \Lambda
_{u}^{W^{*}}du \\
 & + \frac{1}{2}\int_{t}^{s}e^{-\rha(u-t) }
\DK\left(\Dxx-\Dx\right)\BSB_{u}\sigma_{u}e^{\alpha}
e^{\rha(s-u)}e^{X_{u}}\rho \Lambda_{u}^{W^{*}}du\Big] .
\end{align*}

Thus, we get, for $t\leq s$, 
\begin{align*}
V_{t} & = \EE^{*}_{t}\Big[\BSB_{t}
 + \int_{t}^{T}e^{-\rha\left( u-t\right) }\mathcal{L}_{\BS}\left(
v_{u}\right) \BSB_{u}du \\
 & + \frac{1}{2}\int_{t}^{T}e^{-\rha\left( u-t\right) }\left(\Dxx-\Dx\right)
\BSB_{u}\left( \sigma_{u}^{2}-\sigma_{u}^{2}\ind_{\left] s,T
\right] }(u)\right) du \\
 & + \int_{t}^{T}e^{-\rha\left( u-t\right) }\partial^{2}_{xK}\BSB_ud\langle X,M\rangle_{u}
 + \frac{1}{2}\int_{t}^{T}e^{-\rha(u-t)}\DKK\BSB_{u}d\langle M,M\rangle_{u} \\
 & + \frac{1}{2}\int_{t}^{T}e^{-\rha(u-t)}\Dx
\left(\Dxx-\Dx\right)\BSB_{u}\sigma_{u}\rho \Lambda_{u}^{W^{*}}du \\
 & +\frac{1}{2}\int_{t}^{s}e^{-\rha(u-t)}\DK
\left(\Dxx-\Dx\right) \BSB_{u}\sigma_{u}e^{\alpha}e^{\rha(s-u)}e^{X_{u}}\rho \Lambda_{u}^{W^{*}}du\Big] .
\end{align*}

Now, taking into account the identities $\mathcal{L}_{\BS}\left( v_{u}\right)\BSB_{u} = 0$ and 
\begin{equation*}
\begin{array}{rlcrl}
d\langle M,X\rangle_{u}
 & = \displaystyle \sigma_{u}^{2}e^{\alpha}e^{\rha(s-u)}e^{X_{u}}\ind_{\left[0,s\right] }(u)du,
& \displaystyle \partial^{2}_{xK}\BS
 & = & \displaystyle -\frac{1}{K}\left(\Dxx - \Dx\right)\BS,\\
d\langle M,M\rangle_{u}
& \displaystyle = \sigma_{u}^{2}{e^{2\alpha }}e^{2\rha(s-u)}e^{2X_{u}}\ind_{\left[ 0,s\right] }(u)du,
 & \displaystyle \DKK\BS & = & \displaystyle \frac{1}{K^{2}}\left(\Dxx-\Dx\right)\BS,
\end{array}
\end{equation*}
with $\BS$ evaluated at $(t,x,K,\sigma)$.
It follows that 
\begin{align*}
V_{t} & = \EE^{*}_{t}\Big[ \BS(t,X_{t},M_{t},v_{t})
 + \frac{1}{2}\int_{t}^{T}e^{-\rha\left( u-t\right) }\Dx\left(\Dxx-\Dx\right)\BSB_{u}\sigma_{u}\rho \Lambda
_{u}^{W^{*}}du \\
&+\frac{1}{2}\int_{t}^{s}e^{-\rha\left( u-t\right) }\DK\left(\Dxx-\Dx\right)\BSB_{u}\sigma_{u}e^{\alpha}
e^{\rha(s-u)}e^{X_{u}}\rho \Lambda_{u}^{W^{*}}du\Big] \\
&=\EE^{*}_{t}\Big[\BSB_{t}
+\frac{1}{2}\int_{s}^{T}e^{-\rha(u-t)}\Dx\left(\Dxx-\Dx\right)\BSB_{u}\sigma_{u}\rho \Lambda
_{u}^{W^{*}}du \\
&+\frac{1}{2}\int_{t}^{s}e^{-\rha\left( u-t\right) }\Dx\left(\Dxx-\Dx\right)
\BSB_{u}\sigma_{u}\rho \Lambda_{u}^{W^{*}}du \\
& + \frac{1}{2}\int_{t}^{s}e^{-\rha\left( u-t\right) }\DK\left(\Dxx-\Dx\right) \BSB_{u}\sigma_{u}M_{u}\rho
\Lambda_{u}^{W^{*}}du\Big] .
\end{align*}
Since the Black-Scholes function satisfies 
\begin{align*}
\Dx\left(\Dxx-\Dx\right)\BS\left( t,x,k,\sigma \right)
 & = \frac{e^{x}\Nn'(d_{+})}{\sigma \sqrt{T-t}}\left( 1-\frac{d_{+}}{\sigma 
\sqrt{T-t}}\right),\\
\DK\left(\Dxx-\Dx\right)\BS\left( t,x,k,\sigma \right)
 & = \frac{e^{x}\Nn'(d_{+})}{K\sigma \sqrt{T-t}}\left( \frac{d_{+}}{\sigma \sqrt{T-t}}\right) ,
\end{align*}
then it is easy to see that 
\begin{align*}
V_{t} & = \EE^{*}_{t}\Big[\BSB_{t}
+\frac{\rho }{2}\int_{s}^{T}e^{-\rha\left( u-t\right)
}H(u,X_{u},M_{u},v_{u})\sigma_{u}\Lambda_{u}^{W^{*}}du \\
& + \frac{\rho }{2}\int_{t}^{s}e^{-\rha\left( u-t\right)
}G(u,X_{u},M_{u},v_{u})\sigma_{u}\Lambda_{u}^{W^{*}}du\Big].
\end{align*}
The proof then follows from the easy computations
\begin{align*}
\BSB_{t}
 & = e^{X_{t}}\Nn\left( \frac{-{\alpha }+\rha(T-s)}{v_{s}\sqrt{T-s}}+\frac{v_{s}\sqrt{T-s}}{2}\right) \\
 & -e^{\alpha}e^{X_{t}}e^{-\rha(T-s)}\Nn\left( \frac{{-\alpha }+\rha(T-s)}{v_{s}\sqrt{T-s}}-\frac{v_{s}\sqrt{T-s}}{2}\right)
= e^{X_t}\BS\left( s,0,e^{\alpha},v_{s}\right)
\end{align*}
and, for all $u<s$,
$$
G(u,X_{u},M_{u},v_{u}) = \frac{e^{X_{u}}}{v_{s}\sqrt{T-s}}
\Nn'\left( \frac{{-\alpha }+
\rha(T-s)}{v_{s}\sqrt{T-s}}+\frac{v_{s}\sqrt{T-s}}{2}\right)
 = e^{X_{u}}G(s,0,e^{\alpha},v_{s}).
$$

\subsection{Proof of Theorem~\ref{el13}}\label{sec:el13}
Before diving into the proof, let us state and prove the statement when the correlation~$\rho$ is null.
We shall always consider $t<s$.
\begin{lemma}\label{lem:limimp}
Under~\textbf{(H1)}-\textbf{(H2)}, if $\rho =0$, then 
$\lim\limits_{T\downarrow s}I(t,s;\alpha ^{*})=\EE^{*}_{t}\left( \sigma_{s}\right)$.
\end{lemma}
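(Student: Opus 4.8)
The plan is to exploit the drastic simplification of Theorem~\ref{the:hw} when $\rho=0$. In that case both correction integrals vanish and, since $e^{X_t}$ is $\mathcal F_t$-measurable, the price collapses to $V_t=e^{X_t}\EE^{*}_{t}\bigl[\BS(s,0,e^{\alpha^{*}},v_{s})\bigr]$. Comparing with the defining relation~\eqref{elprecio} at $\alpha=\alpha^{*}$, the forward implied volatility $I:=I(t,s;\alpha^{*})$ (which depends on $T$ through $\tau:=T-s$) solves
$$\BS(s,0,e^{\alpha^{*}},I)=\EE^{*}_{t}\bigl[\BS(s,0,e^{\alpha^{*}},v_{s})\bigr].$$
At the forward money level $\alpha^{*}=\rha(T-s)$ one has $d_{\pm}=\pm\tfrac12\sigma\sqrt{T-s}$, so that $\BS(s,0,e^{\alpha^{*}},\sigma)=2\Nn\!\bigl(\tfrac12\sigma\sqrt{T-s}\bigr)-1$, and the equation above becomes a scalar relation for $I$ in terms of $v_{s}$ only.

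First I would rescale. Writing $F(\tau,\sigma):=\tau^{-1/2}\bigl(2\Nn(\tfrac12\sigma\sqrt\tau)-1\bigr)$, a Taylor expansion of $\Nn$ around~$0$ gives $F(\tau,\sigma)=\sigma/\sqrt{2\pi}+O(\tau)$, the remainder being uniform in~$\sigma$ on any compact set; in particular $F(\tau,\cdot)\to(\cdot)/\sqrt{2\pi}$ uniformly on the interval $[a,b]$ furnished by the two-sided bound of~\textbf{(H1)}. Dividing the scalar relation by $\sqrt\tau$ turns it into
$$F(\tau,I)=\EE^{*}_{t}\bigl[F(\tau,v_{s})\bigr].$$
Since $v_{s}=\bigl(\tau^{-1}\int_{s}^{T}\sigma_{u}^{2}\,du\bigr)^{1/2}$ also takes values in $[a,b]$ and $F(\tau,\cdot)$ is strictly increasing, monotonicity immediately yields $a\le I\le b$ for every $\tau$, so $I$ stays in a fixed compact set.

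Next I would pass to the limit on each side. On the right, $v_{s}\to\sigma_{s}$ almost surely as $\tau\downarrow0$ by path-continuity of~$\sigma$ (Lebesgue differentiation), and the integrand is uniformly bounded, so dominated convergence together with the uniform convergence of~$F$ gives $\EE^{*}_{t}[F(\tau,v_{s})]\to \EE^{*}_{t}(\sigma_{s})/\sqrt{2\pi}$. On the left, for any sequence $\tau_{n}\downarrow0$ along which $I$ converges to some $\ell\in[a,b]$ (such sequences exist by compactness), the uniform convergence of~$F$ on $[a,b]$ gives $F(\tau_{n},I)\to\ell/\sqrt{2\pi}$. Equating the two limits forces $\ell=\EE^{*}_{t}(\sigma_{s})$; as every convergent subsequence of the bounded family $\{I\}$ shares this same limit, the whole family converges, proving the claim.

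The main obstacle is the interchange of limit and expectation on the right-hand side, i.e. justifying $\EE^{*}_{t}(v_{s})\to\EE^{*}_{t}(\sigma_{s})$ together with the replacement of $F(\tau,v_{s})$ by its limit. This rests on two ingredients that I would isolate as the technical heart of the argument: the almost-sure convergence $\tau^{-1}\int_{s}^{s+\tau}\sigma_{u}^{2}\,du\to\sigma_{s}^{2}$, which uses continuity of the volatility paths, and the uniformity of $F(\tau,\sigma)=\sigma/\sqrt{2\pi}+O(\tau)$ over $\sigma\in[a,b]$; the boundedness from~\textbf{(H1)} then supplies the dominating function and makes both interchanges legitimate.
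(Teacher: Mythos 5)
Your proposal is correct, and it takes a genuinely different route from the paper's proof. Both arguments start from the same reduction: with $\rho=0$ the two correction integrals in Theorem~\ref{the:hw} vanish, so that $\BS\left(s,0,e^{\alpha^*},I(t,s;\alpha^*)\right)=\EE^{*}_{t}\left[\BS\left(s,0,e^{\alpha^*},v_{s}\right)\right]$. From there the paper writes
$$
I(t,s;\alpha^*)=\EE^{*}_{t}\left[v_{s}\right]+\Big(\BS^{-1}\circ\EE^{*}_{t}\left[\BS\left(s,0,e^{\alpha^*},v_{s}\right)\right]-\EE^{*}_{t}\left[\BS^{-1}\circ\BS\left(s,0,e^{\alpha^*},v_{s}\right)\right]\Big)
$$
and quantifies this Jensen-type gap exactly, via the Clark--Ocone representation of $\BS\left(s,0,e^{\alpha^*},v_{s}\right)$ and It\^o's formula applied to $\BS^{-1}(A_u)$: the gap equals $-\frac{1}{2}\EE^{*}_{t}\int_{t}^{T}(\BS^{-1})''\left(\EE_{r}\left[\BS\left(s,0,e^{\alpha^*},v_{s}\right)\right]\right)U_{r}^{2}\,dr$, which is then shown to vanish as $T\downarrow s$. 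You instead exploit the closed at-the-money form $\BS\left(s,0,e^{\alpha^*},\sigma\right)=2\Nn\left(\tfrac{1}{2}\sigma\sqrt{T-s}\right)-1$, rescale by $\sqrt{T-s}$, and run a compactness argument: \textbf{(H1)} traps both $v_{s}$ and $I$ in a fixed interval $[a,b]$, the expansion $F(\tau,\sigma)=\sigma/\sqrt{2\pi}+\Oo(\tau)$ is uniform on $[a,b]$, conditional dominated convergence (using $v_{s}\to\sigma_{s}$ almost surely, from path continuity) identifies the limit of the right-hand side, and a subsequence argument forces $I\to\EE^{*}_{t}(\sigma_{s})$. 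Your route is more elementary---it needs no Malliavin calculus at all for this lemma---and it makes explicit the step $\EE^{*}_{t}(v_{s})\to\EE^{*}_{t}(\sigma_{s})$ (path continuity plus domination), which the paper leaves implicit. What the paper's heavier machinery buys is the explicit second-order correction term and the Clark--Ocone integrand $U$ of~\eqref{eq1}: that same device is the backbone of Lemma~\ref{lem:aux8} and of the curvature result, Theorem~\ref{el14}, where one needs an expansion with a rate, not merely the limit; your argument delivers the limit but no rate.
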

\begin{proof}
From the definition~\eqref{elprecio} of the implied volatility, we can write
\begin{align*}
I(t,s;\alpha ^{*})
 & = \BS^{-1}\circ\EE^{*}_{t}\left[\BS\left( s,0,e^{\alpha^{*}},v_{s}\right)\right]\\
 & = \EE^{*}_{t}\Big\{ \BS^{-1}\circ\BS\left( s,0,e^{\alpha^{*}},v_{s}\right)\\
 & \quad +\BS^{-1}\circ\EE^{*}_{t}\left[\BS\left( s,0,e^{\alpha^{*}},v_{s}\right)
\right]
 - \BS^{-1}\circ\BS\left( s,0,e^{\alpha^{*}},v_{s}\right)\Big\}\\
 & = \EE^{*}_{t}\left[ v_{s}+\BS^{-1}\circ\EE^{*}_{t}\left[\BS\left(s,0,e^{\alpha^{*}},v_{s}\right)\right] -\BS^{-1}\circ\BS\left( s,0,e^{\alpha^{*}},v_{s}\right)\right].
\end{align*}
A direct application of Clark-Ocone's formula~\cite[Proposition 1.3.14]{Nualart} yields
$\BS\left(s,0,e^{\alpha^{*}},v_{s}\right) = A_T$,
with
\begin{align}
U_{u} & := \EE^{*}_{u}\left[\frac{\partial \BS}{\partial \sigma }\left( s,0,e^{\alpha^{*}},v_{s}\right) \frac{D_{u}^{W^{*}}\int_{s}^{T}\sigma_{r}^{2}dr}{2(T-s)v_{s}}\right],\label{eq1}\\
A_u & := \EE^{*}_{t}\left[\BS(s,0,e^{\alpha^*},v_{s})\right] +\int_{t}^{u}U_{r}dW_{r}.\nonumber
\end{align}
Applying It\^{o}'s formula to 
$\BS^{-1}(A_u)$ and taking expectations, we obtain
\begin{align*}
 & \EE^{*}_{t}\Big[\BS^{-1}\circ\EE^{*}_{t}\left[\BS\left(s,0, e^{\alpha^{*}},v_{s}\right]\right)
  - \BS^{-1}\circ\BS\left(s,0,e^{\alpha^{*}},v_{s}\right)\Big]\\
& = -\frac{1}{2}\EE^{*}_{t}\left\{\int_{t}^{T}(\BS^{-1})''
\left(\EE_{r}\left[\BS\left(s,0,e^{\alpha^{*}},v_{s}\right)\right]\right)U_{r}^{2}dr\right\},
\end{align*}
so that
$$
\lim_{T\downarrow s}I(t,s,\alpha ^{*})
 = \EE^{*}_{t}\left( \sigma_{s}\right) 
 - \frac{1}{2}\lim_{T\downarrow s}
\EE^{*}_{t}\left\{\int_{t}^{T}(\BS^{-1})''
\left(\EE_{r}\left[\BS\left( s,0,e^{\alpha^{*}},v_{s}\right)\right]\right)U_{r}^{2}dr\right\}.
$$
Now, considering that 
$$
\left(\BS^{-1}\right)''\left(\EE_{r}\left[\BS\left( s,0,e^{\alpha^*},v_{s}\right)\right]\right)
 = \frac{\BS^{-1}\left(\EE_{r}\left[\BS\left( s,0,e^{\alpha^*},v_{s}\right)\right](T-s)\right)}{4\Nn'(\gamma)^2(T-s)},
$$
where 
$\gamma:=\frac{1}{2}\BS^{-1}\left(\EE_{r}\left[\BS\left(s,0,e^{\alpha^{*}},v_{s}\right)\right]\sqrt{T-s}\right)$, 
the lemma follows from
$$
\lim_{T\downarrow s}\frac{1}{2}\EE^{*}_{t}\left\{\int_{t}^{T}(\BS^{-1})''
\left(\EE_{r}\left[\BS\left( s,0,e^{\alpha^*},v_{s}\right)\right]\right)U_{r}^{2}dr\right\}=0.
$$
\end{proof}

The following technical lemma, 
which follows similar arguments to~\cite[Lemma 4.1]{alv07},
shall be of fundamental importance in the proof of the theorem:

\begin{lemma}\label{lemaconJosep}
Let $0\leq t\leq s, u < T$. 
For every $n\geq 0$, there exists $C>0$ such that 
$$
\left|\EE\left( \left. {\partial_{x}^{n}G}\left( u,X_{u}, M_u,v_{u}\right)\right| \mathcal{F}_{t}\vee\mathcal{F}_{T}^{W^*}\right) \right|
 \leq 
C \EE\left. \left( e^{X_{s\wedge u}} \right| \mathcal{F}_{t}\vee\mathcal{F}_{T}^{W^*}\right) 
\left( \int_{s}^{T}\sigma_{s}^{2}ds\right) ^{-\frac{1}{2}\left( n+1\right) }.
$$
\end{lemma}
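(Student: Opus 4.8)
The plan is to differentiate $G$ explicitly, then to condition on $\mathcal{G}:=\mathcal{F}_{t}\vee\mathcal{F}_{T}^{W^*}$, which freezes the whole volatility path (as $\sigma$ is $\mathcal{F}^{W^*}$-adapted) and reduces the estimate to a Gaussian computation for the log-price. Indeed, under~\textbf{(H1)} the process $(e^{X_{\theta}-\rha\theta})_{\theta}$ is a true martingale, so that, conditionally on $\mathcal{G}$, the variable $X_u$ is Gaussian with $\mathcal{G}$-measurable mean and variance $(1-\rho^{2})\int_{t}^{u}\sigma_{\theta}^{2}d\theta$, while $v_u$, $Y_u$ and every volatility-dependent coefficient become deterministic.

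First I would establish the closed form of the derivative. Writing $a:=(v_u\sqrt{T-u})^{-1}=(\int_{u\vee s}^{T}\sigma_{\theta}^{2}d\theta)^{-1/2}$ and recalling that $G(u,x,K,\sigma)=(\partial_{xx}-\partial_{x})\BS=\frac{e^{x}\Nn'(d_{+})}{\sigma\sqrt{T-u}}$, where $d_{+}$ is affine in $x$ with $\partial_{x}d_{+}=a$, an induction (via the Bell-polynomial expansion of $\partial_{x}^{n}e^{x-d_{+}^{2}/2}$) yields
$$
\partial_{x}^{n}G(u,x,K,\sigma)=e^{x}\Nn'(d_{+})\sum_{0\leq q\leq p\leq n}c_{p,q}\,a^{p+1}d_{+}^{q},
$$
for universal constants $c_{p,q}$, the highest power being $a^{n+1}$. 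As a Gaussian density times a polynomial is bounded, $\Nn'(d_{+})|d_{+}|^{q}\leq C_{q}$, and since $Y_u$ is bounded above by~\textbf{(H1)}, every term is dominated by the one carrying $a^{n+1}$.

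The case $u\leq s$ is then immediate: here $s\wedge u=u$ and, by the martingale property, $M_u=\EE^{*}_{u}(e^{\alpha}e^{X_s})=e^{\alpha+\rha(s-u)}e^{X_u}$, so that $x-\ln K=X_u-\ln M_u=-\alpha-\rha(s-u)$ is deterministic and $d_{+}$ is $\mathcal{G}$-measurable. Pulling the bounded factor $\Nn'(d_{+})d_{+}^{q}$ out of $\EE(\cdot\mid\mathcal{G})$ leaves $\EE(e^{X_u}\mid\mathcal{G})=\EE(e^{X_{s\wedge u}}\mid\mathcal{G})$ times $a^{p+1}=(\int_{s}^{T}\sigma_{\theta}^{2}d\theta)^{-(p+1)/2}$, and dominating by the $p=n$ term gives the claim.

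The genuine work lies in the regime $u>s$, which I expect to be the main obstacle. Now $s\wedge u=s$ and $M_u=e^{\alpha}e^{X_s}$, so $d_{+}$ depends on the increment $X_u-X_s$, which given $\mathcal{G}$ is Gaussian with variance $(1-\rho^{2})\int_{s}^{u}\sigma_{\theta}^{2}d\theta$ and independent of $X_s$ (the driving $B^*$-increments live on disjoint intervals). Using the symmetry $e^{x}\Nn'(d_{+})=\frac{1}{\sqrt{2\pi}}\sqrt{e^{x}Ke^{-\rha(T-u)}}\,e^{-m^{2}/2}e^{-v_u^{2}(T-u)/8}$, with $m:=(x-\ln K+\rha(T-u))\,a$, one factors out $\EE(e^{X_s}\mid\mathcal{G})$ and is left with a one-dimensional Gaussian integral in $X_u-X_s$. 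Its Gaussian factor (of variance $\propto\int_{u}^{T}\sigma_{\theta}^{2}d\theta$, coming from $\Nn'(d_{+})$) convolves with the conditional law of the increment (of variance $\propto\int_{s}^{u}\sigma_{\theta}^{2}d\theta$), and the decisive point is that $(1-\rho^{2})\int_{s}^{u}\sigma_{\theta}^{2}d\theta+\int_{u}^{T}\sigma_{\theta}^{2}d\theta$ is comparable (for $|\rho|<1$) to $\int_{s}^{T}\sigma_{\theta}^{2}d\theta$, so that the target normalisation emerges rather than the local $\int_{u}^{T}\sigma_{\theta}^{2}d\theta$. The same integration produces a Gaussian decay factor in the log-moneyness, of the form $\exp\{-(\rha(T-s)-\alpha)^{2}/(2\int_{s}^{T}\sigma_{\theta}^{2}d\theta)\}$ up to a $\mathcal{G}$-measurable drift, which dominates any negative power of $\int_{s}^{T}\sigma_{\theta}^{2}d\theta$ generated by the polynomial part (and is simply $1$ at the at-the-money point $\alpha^{*}=\rha(T-s)$). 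Tracking the exponents of $a$ through the resulting Gaussian moments then bounds the contribution by $(\int_{s}^{T}\sigma_{\theta}^{2}d\theta)^{-(n+1)/2}$, in fact with room to spare. The delicate steps are therefore purely computational: controlling the Gaussian moments of $d_{+}^{q}$ and checking that the decay factor absorbs the most singular contributions uniformly in~$T$.
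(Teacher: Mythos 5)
Your strategy---conditioning on $\mathcal{G}:=\mathcal{F}_{t}\vee\mathcal{F}_{T}^{W^*}$ to freeze the volatility path, splitting into the regimes $u\leq s$ and $u>s$, and reducing everything to Gaussian computations for the log-price---is exactly the route behind \cite[Lemma 4.1]{alv07}, which the paper invokes in place of a written proof, so you have found the intended argument. Your treatment of $u\leq s$ is complete: there $M_u=e^{\alpha+\rha(s-u)}e^{X_u}$, so the log-moneyness is deterministic, $d_{+}$ is $\mathcal{G}$-measurable, $\Nn'(d_{+})d_{+}^{q}$ is bounded, $a=(\int_{s}^{T}\sigma_{\theta}^{2}d\theta)^{-1/2}$ is already the target normalisation, and lower powers of $a$ are absorbed via the upper bound in \textbf{(H1)}. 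Your identification of the key mechanism for $u>s$ (conditional independence of $X_s$ and $X_u-X_s$, and comparability of $(1-\rho^{2})\int_{s}^{u}\sigma_{\theta}^{2}d\theta+\int_{u}^{T}\sigma_{\theta}^{2}d\theta$ with $\int_{s}^{T}\sigma_{\theta}^{2}d\theta$) is also correct.

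The gap is in how you propose to \emph{finish} the case $u>s$, and it bites exactly where the paper applies the lemma ($n\geq 1$, $u\in(s,T)$, at the money). There $a=(\int_{u}^{T}\sigma_{\theta}^{2}d\theta)^{-1/2}$ is the \emph{local} normalisation, and the plan "dominate every term by the one carrying $a^{n+1}$, then control Gaussian moments" cannot work: writing $\varsigma^{2}:=(1-\rho^{2})\int_{s}^{u}\sigma_{\theta}^{2}d\theta$, each term of your expansion, integrated in absolute value against the conditional law of $X_u-X_s$, is only of order $a^{p+1}(1+a^{2}\varsigma^{2})^{-1/2}$; the absolute Gaussian moments buy exactly one factor of the combined standard deviation, not $n+1$ of them. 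For $p=n=1$ this gives $a^{2}(1+a^{2}\varsigma^{2})^{-1/2}$ against a target of order $a^{2}(1+a^{2}\varsigma^{2})^{-1}$, and the ratio diverges as $\int_{u}^{T}\sigma_{\theta}^{2}d\theta\downarrow 0$ with $\varsigma$ fixed. So the bound is \emph{not} obtained "with room to spare": it holds only through cancellations between the terms (odd signed moments vanish, Hermite-type orthogonality), which absolute-value domination destroys; and the moneyness decay factor you invoke cannot rescue a crude bound, since it equals $1$ at $\alpha=\alpha^{*}$, the only point at which the lemma is used. The clean fix is to \emph{not} differentiate first. Since $d_{\pm}$ depends on $(x,K)$ only through $x-\ln K$, one has $\partial_{x}^{n}G(u,X_u,M_u,v_u)=e^{X_s}\,(\partial_{x}^{n}G)(u,Z,e^{\alpha},v_u)$ with $Z:=X_u-X_s$, and conditional independence together with $\partial_{z}\phi_{\mu,\varsigma^{2}}=-\partial_{\mu}\phi_{\mu,\varsigma^{2}}$ (integration by parts $n$ times) yield
$$
\EE\left(\partial_{x}^{n}G(u,X_u,M_u,v_u)\mid\mathcal{G}\right)
=\EE\left(e^{X_s}\mid\mathcal{G}\right)
\partial_{\mu}^{n}\int_{\RR}G(u,z,e^{\alpha},v_u)\,\phi_{\mu,\varsigma^{2}}(z)\,dz,
$$
where $\phi_{\mu,\varsigma^{2}}$ is the conditional density of~$Z$. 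By the symmetry $e^{z}\Nn'(d_{+}(z))=e^{\alpha-\rha(T-u)}\Nn'(d_{-}(z))$, the integrand is $e^{\alpha-\rha(T-u)}$ times a Gaussian density in $z$ of variance $1/a^{2}$, so the integral is $e^{\alpha-\rha(T-u)}$ times a Gaussian density of variance $1/a^{2}+\varsigma^{2}$ evaluated at a $\mathcal{G}$-measurable point; its $n$-th derivative in $\mu$ is uniformly bounded by $C_{n}(1/a^{2}+\varsigma^{2})^{-(n+1)/2}\leq C_{n}\left((1-\rho^{2})\int_{s}^{T}\sigma_{\theta}^{2}d\theta\right)^{-(n+1)/2}$, which is precisely the statement of the lemma.
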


\begin{proof}[Proof of Theorem~\ref{el13}]
We know, from Theorem~\ref{the:hw}, that 
\begin{align*}
I(t,s;\alpha^{*})
 & = \BS^{-1}\Big\{ \EE^{*}_{t}\left[\BS\left( s,0,e^{\alpha^*},v_{s}\right)\right]\\
& + \frac{\rho }{2}e^{-X_{t}}\EE^{*}_{t}\left(\int_{s}^{T}e^{-\rha\left( u-t\right)
}H(u,X_{u},M_{u},v_{u})\sigma_{u}\Lambda_{u}^{W^{*}}du\right)\\
& + \frac{\rho }{2}e^{-X_{t}}\EE^{*}_{t}\left( G(s,0,e^{\alpha^*},v_{s})\int_{t}^{s}e^{-\rha(u-t)}e^{X_{u}}\sigma_{u}\Lambda
_{u}^{W^{*}}du\right) \Big\}.
\end{align*}
By the mean value theorem, we can find $\theta$ between
$\EE^{*}_{t}\left[\BS( s,0,e^{\alpha^*},v_{s})\right]$
and
\begin{align*}
 & \EE^{*}_{t}\left[\BS\left( s,0,e^{\alpha^*},v_{s}\right)
 + \frac{\rho }{2}e^{-X_{t}}\int_{s}^{T}e^{-\rha\left( u-t\right)
}H(u,X_{u},M_{u},v_{u})\sigma_{u}\Lambda_{u}^{W^{*}}du \right.\\
& +\left.\frac{\rho }{2}e^{-X_{t}} G(s,0,e^{\alpha^*},v_{s})\int_{t}^{s}e^{-\rha(u-t)}e^{X_{u}}\sigma_{u}\Lambda
_{u}^{W^{*}}du\right],
\end{align*}
such that, denoting $\widetilde{I}_1, \widetilde{I}_2$ respectively the second and third expectations,
$$
I(t,s;\alpha ^{*}) - \BS^{-1}\circ\EE^{*}_{t}\left[\BS( s,0,e^{\alpha^*},v_{s})\right]
 = \frac{\sqrt{2\pi}\exp\left\{\frac{\BS^{-1}(\theta)}{8}(T-s)\right\}}{\sqrt{T-s}}
(\widetilde{I}_1+\widetilde{I}_2)
 =: I_1 + I_2.
$$
From Lemma~\ref{lemaconJosep}, we have
\begin{align*}
\lim_{T\downarrow s}\left|I_1\right|
 & \leq C\lim_{T\downarrow s}\frac{e^{- X_t}}{\sqrt{T-s}}\int_s^T
 \frac{\EE^{*}\left(e^{X_s}\vert\mathcal{G}_t\right)}{T-s}\left|\Lambda_{u}^{W^{*}} \right|du
 \leq Ce^{- X_t}\lim_{T\downarrow s}
 \sqrt{(T-s)\EE^{*}\left(e^{2X_s}\left|\mathcal{G}_t\right.\right)}=0,
\end{align*}
which, together with Lemma~\ref{lem:limimp}, implies
\begin{align*}
\lim_{T\downarrow s}I(t,s;\alpha ^{*})
 & = \EE^{*}_{t}(\sigma_s) + \frac{\rho e^{X_t}}{2}
\lim_{T\downarrow s}\EE^{*}_{t}\left(\frac{\exp\left(\frac{v_s^2(T-s)}{8}\right)}{v_s(T-s)}
\int_{t}^{s}e^{-\rha(u-t)}e^{X_{u}}\sigma_{u}\Lambda
_{u}^{W^{*}}du
\right)\\
 & = \EE^{*}_{t}(\sigma_s)
+\frac{\rho e^{X_t}}{2}\EE^{*}_{t}\left(\frac{1}{\sigma_s}
\int_{t}^{s}e^{-\rha(u-t)}e^{X_{u}}\sigma_{u}\left(D_{u}^{W^{*}}\sigma^2_s\right)du
\right),
\end{align*}
and the theorem follows.
\end{proof}

\subsection{Proof of Theorem~\ref{the:pdi}}\label{sec:the:pdi}
Differentiating~\eqref{elprecio} with respect to the log-forward moneyness~$\alpha$ yields
$$
\frac{\partial V_{t}}{\partial \alpha}
 = e^{X_{t}}\Dk\BS\left( s,0,e^{\alpha},I(t,s;\alpha )\right) \\
 + e^{X_{t}}\Ds\BS\left( s,0,e^{\alpha},I(t,s;\alpha )\right) \frac{\partial I}{\partial \alpha }(t,s;\alpha),
$$
with $k:=\log(K)$.
Then, from Theorem~\ref{the:hw} we are able to write 
\begin{equation}\label{eq:DecompSkew}
\begin{array}{rl}
\displaystyle \frac{\partial I}{\partial \alpha}(t,s;\alpha) 
 & = \displaystyle \frac{\frac{\partial V_{t}}{\partial \alpha }-e^{X_{t}}\Dk\BS\left( s,0,e^{\alpha},I(t,s;\alpha )\right)}
{e^{X_{t}}\Ds\BS\left( s,0,e^{\alpha},I(t,s;\alpha)\right)}  \\
\text{} & = \displaystyle \frac{\EE^{*}_{t}\left[\Dk\BS\left( s,0,{e^{\alpha }},{v}_{s}\right) \right]
 - \Dk\BS\left(s,0,e^{\alpha},I(t,s;\alpha )\right) }{\Ds\BS\left( s,0,e^{\alpha},I(t,s;\alpha )\right)}  \\
\text{} & \displaystyle + \frac{\rho}{2}\frac{
\EE^{*}_{t}\left[\int_{s}^{T}e^{-\rha(u-t) }\Dk H(u,X_{u},e^{\alpha}
e^{X_{s}},v_{u})\sigma_{u}\Lambda_{u}^{W^{*}}du\right]}{e^{X_{t}}
\Ds\BS\left(s,0,e^{\alpha},I(t,s;\alpha)\right)}\\
\text{} & \displaystyle  + \frac{\rho}{2}\frac{\EE^{*}_{t}\left[\Dk G(s,0,e^{\alpha},{v}_{s})\int_{t}^{s}e^{X_{u}}{e^{-\rha(u-t)}}\sigma_{u}\Lambda_{u}^{W^{*}}du\right] }{e^{X_{t}}\Ds\BS
\left(s,0,e^{\alpha},I(t,s;\alpha)\right)}
 =: T_{1}+T_{2}+T_{3}.
\end{array}
\end{equation}
In the uncorrelated case $\rho =0$, this expression simplifies to 
$$
\frac{\partial I}{\partial \alpha }(t,s;\alpha )
 = \frac{\EE^{*}_{t}\left[\Dk\BS\left( s,0,e^{\alpha},{v}_{s}\right)\right]
 - \Dk\BS \left(s,0,e^{\alpha},I(t,s;\alpha )\right)}{\Ds\BS \left( s,0,e^{\alpha},I(t,s;\alpha)\right)},
$$
and in the at-the-money case $\alpha =\alpha^{*}$, we obtain, by Theorem~\ref{the:hw}, 
\begin{align*}
\EE^{*}_{t}\left[\Dk\BS\left( s,0,e^{\alpha ^{*}},v_{s}\right) \right]
 & = \EE^{*}_{t}\left[ -\Nn\left( -\frac{{v}_{s}\sqrt{T-s}}{2}\right) \right]
 = \EE^{*}_{t}\left[ \frac{\Nn\left( \frac{{v}_{s}\sqrt{T-s}}{2}\right)
-\Nn\left( -\frac{{v}_{s}\sqrt{T-s}}{2}\right) -1}{2}\right] \\
 & =\EE^{*}_{t}\left[ \frac{\BS\left(s,0,e^{\alpha^*},{v}_{s}\right) -1}{2}\right]
  = \frac{V_{t}e^{-X_{t}} -1}{2}.
\end{align*}
Since furthermore
$$
\Dk\BS\left( s,0,e^{\alpha^{*}},I(t,s;\alpha ^{*})\right)
 = - \Nn\left( -\frac{I(t,s;\alpha ^{*})\sqrt{T-s}}{2}\right)
 = \frac{V_{t}e^{-X_{t}} -1}{2},
 $$
then the at-the-money forward skew $\frac{\partial I}{\partial \alpha }(t,s;\alpha^{*})$ is null.
Proceeding now to the general correlated case, the decomposition~\eqref{eq:DecompSkew}
and Theorem~\ref{the:hw} yield
\begin{align*}
&\EE^{*}_{t}\left[ \Dk\BS(s,0,e^{\alpha^{*}},v_{s})\right]
 - \Dk\BS\left( s,0,e^{\alpha^*},I(t,s;\alpha ^{*})\right) \\
&=\EE^{*}_{t}\left[ \frac{\BS\left( s,0,e^{\alpha^*},v_{s}\right) -1}{2}\right] -\frac{V_{t}e^{-X_{t}} -1}{2} \\
&=-\frac{e^{-X_{t}}}{2}\EE^{*}_{t}\left\{ \frac{\rho }{2}
\int_{s}^{T}e^{-\rha(u-t)}H(u,X_{u},M_{u},v_{u})\sigma_{u}\Lambda_{u}^{W^{\ast
}}du|_{\alpha =\alpha ^{*}}\right. \\
&\left. -\frac{e^{-X_{t}}\rho }{4}G(s,0,e^{\alpha^{*}},v_{s})\int_{t}^{s}e^{-\rha\left( u-t\right) }\sigma
_{u}e^{X_{u}}\Lambda_{u}^{W^{*}}du\right\} .
\end{align*}
Then, 
$\lim\limits_{T\downarrow s}T_{1} = \lim\limits_{T\downarrow s}I_{1} + \lim\limits_{T\downarrow s}I_{2}$,
where
\begin{align*}
I_{1} & = -\frac{e^{-X_{t}}\rho \EE^{*}_{t}\left(
\int_{s}^{T}e^{-\rha\left( u-t\right) }H(u,X_{u},M_{u},v_{u})\sigma
_{u}\Lambda_{u}^{W^{*}}du\right) |_{\alpha =\alpha ^{*}}}{4\frac{
\partial \BS}{\partial \sigma }\left( s,0,e^{\alpha^*},I(t,s;\alpha ^{*})\right) },\\
I_{2} & = -\frac{e^{-X_{t}}\rho \EE^{*}_{t}\left( G(s,0,e^{\alpha ^{*}},v_{s})\int_{t}^{s}e^{-\rha\left( u-t\right) }\sigma
_{u}e^{X_{u}}\Lambda_{u}^{W^{*}}du\right) }{4\frac{\partial \BS}{\partial \sigma }\left( s,0,e^{\alpha^*},I(t,s;\alpha ^{\ast
})\right) }.
\end{align*}
Under~\textbf{(H3)}, $\lim\limits_{T\downarrow s}I_{1}=0$ by Lemma~\ref{lemaconJosep}. 
Furthermore 
$G(s,0,e^{\alpha^*},{v}_{s})=2\Dk G(s,0,e^{\alpha^{*}},v_{s})$,
so that
$$
I_{2}
 = -\frac{\rho }{2}\frac{e^{-X_{t}}\EE^{*}_{t}\left( \Dk G(s,0,e^{\alpha^*},v_{s})
\int_{t}^{s}e^{-\rha\left( u-t\right) }e^{X_{u}}\sigma_{u}\Lambda
_{u}^{W^{*}}du\right)}
{\Ds\BS\left(s,0,e^{\alpha^*},I(t,s;\alpha ^{*})\right) }
 = -T_{3}.
$$
On the other hand, using Theorem~\ref{the:pdi}, Lemma~\ref{lemaconJosep} and the
anticipating It\^{o}'s formula again, it follows that 
\begin{align*}
\lim_{T\downarrow s} T_{2}
 & = \lim_{T\downarrow s}\frac{\frac{\rho }{2}\EE^{*}_{t}\left[
\int_{s}^{T}e^{-\rha(u-t)}\Dk H(u,X_{u},M_{u},v_{u})\sigma_{u}
\Lambda_{u}^{W^{*}}du\right]}{e^{X_{t}}\Ds\BS\left( s,0,e^{\alpha^*},I(t,s;\alpha ^{*})\right)}\\
 & = \frac{\rho }{2}\lim_{T\downarrow s}\frac{\EE^{*}_{t}\left[ \Dk H(s,X_{s},M_{s},v_{s})e^{-\rha\left( s-t\right)}
 \int_{s}^{T}\sigma_{u}\Lambda_{u}^{W^{*}}du\right] }{e^{X_{t}}
\Ds\BS\left( s,0,{e^{\alpha^*}},I(t,s;\alpha ^{*})\right) } \\
 & = \frac{\rho }{2}\lim_{T\downarrow s}\EE^{*}_{t}\left[ \frac{e^{X_{s}}e^{-\rha(s-t)}}{e^{X_{t}}v_{s}^{3}(T-s)^{2}}\int_{s}^{T}\sigma
_{u}\Lambda_{u}^{W^{*}}du\right] \\
 & = \frac{\rho}{2e^{X_{t}}}\left\{\lim_{T\downarrow s}
 \EE^{*}_{t}\left(
\frac{e^{X_{s}}e^{-\rha(s-t)}}{\sigma_{s}^{3}(T-s)^{2}}
\int_{s}^{T}\sigma_{u}\Lambda_{u}^{W^{*}}du\right)\right\}^{-1}.
\end{align*}
Now, \eqref{condition} allows us to write 
\begin{equation*}
\lim_{T\downarrow s}T_{2}
 = \frac{\rho e^{-\rha(s-t)}}{4e^{X_{t}}}
 \EE^{*}_{t}\left(\frac{\overline{\sigma}_{s}^{2}}{\sigma_{s}^{2}}\right)^{-1},
\end{equation*}
which completes the proof of the theorem.


\subsection{Proof of Theorem~\ref{el14}}\label{sec:el14}
The proof of the theorem will be split into three parts:
we first state and prove the technical lemma below, similar to~\cite[Theorem 5]{Alos2015},
which we then use to prove the theorem in the uncorrelated case $\rho=0$, 
and we finally proceed to the general case.

\subsubsection{A technical lemma}
\begin{lemma}\label{lem:aux8} 
Under~\textbf{(H1)}-\textbf{(H2)}, the equality 
$$
\Ds\BS\left( s,0,e^{\alpha^*},I(t,s;\alpha ^{*})\right) \frac{\partial ^{2}I}{\partial \alpha ^{2}}
(t,s;\alpha ^{*})
 =  \frac{1}{2}\EE^{*}_{t}\left[ \int_{t}^{T}\frac{\partial ^{2}\Psi }{
\partial a^{2}}\left( \EE^{*}_{u}\left( \BS\left( s,0,e^{\alpha^*},v_{s}\right) \right) \right) U_{u}^{2}du\right]
$$
holds, where 
$\Psi (a):=
\displaystyle \Dkk\BS\left( s,0,e^{\alpha^{*}},\BS^{-1}\left( a\right) \right)$
and~$U$ is given in~\eqref{eq1}.
\end{lemma}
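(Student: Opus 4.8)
The plan is to lift, to the level of the second derivative, the argument behind Lemma~\ref{lem:limimp}: there the map from price to implied volatility was~$\BS^{-1}$, and the convexity correction was extracted by running It\^o's formula along the martingale $A_u:=\EE^{*}_u[\BS(s,0,e^{\alpha^{*}},v_s)]$, whose Clark--Ocone integrand is precisely the process~$U$ of~\eqref{eq1}. For the curvature the relevant scalar map is $\Psi=\Dkk\BS(s,0,e^{\alpha^{*}},\cdot)\circ\BS^{-1}$, and the very same martingale will carry the computation.

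First I would differentiate the defining relation~\eqref{elprecio} twice in~$\alpha$ (recalling that $\partial_{\alpha}=\Dk$ acts through the log-strike $k=\alpha$) and evaluate at $\alpha=\alpha^{*}$. Solving for the top-order piece isolates
\begin{align*}
\Ds\BS\bigl(s,0,e^{\alpha^{*}},I(t,s;\alpha^{*})\bigr)\,\frac{\partial^{2}I}{\partial\alpha^{2}}(t,s;\alpha^{*})
 &= e^{-X_{t}}\frac{\partial^{2}V_{t}}{\partial\alpha^{2}} - \Dkk\BS\bigl(s,0,e^{\alpha^{*}},I\bigr) \\
 &\quad - 2\,\Dsk\BS\cdot\frac{\partial I}{\partial\alpha}(t,s;\alpha^{*}) - \Dss\BS\cdot\Bigl(\frac{\partial I}{\partial\alpha}(t,s;\alpha^{*})\Bigr)^{2},
\end{align*}
all Greeks being evaluated at $I=I(t,s;\alpha^{*})$; this is the first-order relation obtained from~\eqref{eq:DecompSkew} differentiated once more.

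Next I would rewrite the price term. Conditioning on $\mathcal{F}_{t}\vee\mathcal{F}_{T}^{W^{*}}$ gives $e^{-X_{t}}V_{t}=\EE^{*}_{t}[\BS(s,0,e^{\alpha},v_s)]$ in the uncorrelated case, and Theorem~\ref{the:hw} supplies the analogue with explicit $\rho$-corrections in general, so that differentiating twice in~$\alpha$ turns the leading contribution into $\EE^{*}_{t}[\Dkk\BS(s,0,e^{\alpha^{*}},v_s)]$. I would then invoke the Clark--Ocone representation~\eqref{eq1}, whereby $\BS(s,0,e^{\alpha^{*}},v_s)=A_{T}$ with $dA_{u}=U_{u}\,dW^{*}_{u}$, and apply It\^o's formula to $\Psi(A_{u})$. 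Taking $\EE^{*}_{t}$ annihilates the stochastic integral (under the integrability granted by~\textbf{(H1)}--\textbf{(H2)}, which keep $A_{u}$ in a range where $\Psi\in\mathcal{C}^{2}$), leaving
$$
\EE^{*}_{t}\bigl[\Psi(A_{T})\bigr]-\Psi(A_{t}) = \frac12\,\EE^{*}_{t}\!\left[\int_{t}^{T}\Psi''(A_{u})\,U_{u}^{2}\,du\right],
$$
with boundary values $\Psi(A_{T})=\Dkk\BS(s,0,e^{\alpha^{*}},v_s)$ and $\Psi(A_{t})=\Dkk\BS(s,0,e^{\alpha^{*}},\BS^{-1}(A_{t}))$.

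The final and most delicate step is the reconciliation: I must show that the non-diffusive remainder of the first display collapses exactly onto $\EE^{*}_{t}[\Psi(A_{T})]-\Psi(A_{t})$. In the uncorrelated case this is immediate, since the ATM skew $\frac{\partial I}{\partial\alpha}(t,s;\alpha^{*})$ vanishes (as in the skew proof), $\BS^{-1}(A_{t})=I$, and the two $\Dkk$-terms line up. The hard part is the correlated case, where $\BS^{-1}(A_{t})\neq I$ and the skew is nonzero: here I would expand the $\rho$-correction terms of Theorem~\ref{the:hw} and use the at-the-money Greek identities already exploited in the skew proof (notably $G(s,0,e^{\alpha^{*}},v_s)=2\Dk G(s,0,e^{\alpha^{*}},v_s)$ and the explicit value of $\Dk\BS$ at the money) to verify that the level- and skew-corrections combine \emph{exactly} into the gap between the two boundary evaluations of~$\Psi$. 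Securing this exact cancellation, rather than a mere leading-order agreement, is where the bulk of the bookkeeping will lie.
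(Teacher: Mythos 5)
Your first three steps reproduce the paper's proof exactly: differentiate~\eqref{elprecio} twice in~$\alpha$ and isolate the term $\Ds\BS\cdot\partial^2_{\alpha\alpha}I$; identify $e^{-X_t}\,\partial^2_{\alpha\alpha}V_t\big|_{\alpha=\alpha^*}$ with $\EE^*_t\left[\Dkk\BS\left(s,0,e^{\alpha^*},v_s\right)\right]$ through the price representation; then write $\BS\left(s,0,e^{\alpha^*},v_s\right)=A_T$ via Clark--Ocone with integrand~$U$ from~\eqref{eq1}, apply It\^o's formula to $\Psi(A_u)$ and take conditional expectations, so that the right-hand side of the lemma equals $\EE^*_t\left[\Psi(A_T)\right]-\Psi(A_t)$. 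In the uncorrelated case your reconciliation (the ATM skew vanishes, $\BS^{-1}(A_t)=I$, and the two $\Dkk$-evaluations line up) is also the paper's, and at that point the proof is finished.

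The genuine problem is your last step. The ``exact cancellation'' you propose to secure in the correlated case is not deferred bookkeeping---it is false. For $\rho\neq 0$ three discrepancies appear simultaneously: differentiating the decomposition of Theorem~\ref{the:hw} twice produces the extra terms involving $\Dkk H$ and $\Dkk G(s,0,e^{\alpha^*},v_s)$; the ATM skew no longer vanishes (Theorem~\ref{the:pdi} gives it a nonzero limit), so the $\Dsk$- and $\Dss$-terms survive; and $I(t,s;\alpha^*)\neq\BS^{-1}(A_t)=I^0(t,s;\alpha^*)$, where $I^0$ denotes the uncorrelated implied volatility. These are precisely the terms $T_2,\dots,T_6$ in the paper's proof of the general case of Theorem~\ref{el14}, and they do not sum to zero: after multiplication by $(T-s)$, $T_2$ tends to $\frac{1}{I^0(t,s;\alpha^*)}-\frac{1}{I(t,s;\alpha^*)}$ and $T_6$ tends to $-\frac{\rho}{2}e^{-X_t}\EE^*_t\left(\sigma_s^{-3}\int_t^s e^{-\rha(u-t)}e^{X_u}\sigma_u\left(D_u^{W^*}\sigma_s^2\right)du\right)$, both nonzero in general (only $T_3+T_4+T_5$ vanish in the limit). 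An identity valid for every $T>s$ would force all of these to cancel identically, contradicting those limits. The correct resolution is that Lemma~\ref{lem:aux8} is an uncorrelated-case statement---equivalently, a statement about $I^0$: the paper's own proof silently sets $\rho=0$ when it drops the skew terms and writes $I=\BS^{-1}\left(\EE^*_t\left[\BS\left(s,0,e^{\alpha^*},v_s\right)\right]\right)$, and the lemma is later invoked only for $I^0$ (in the rewriting of $T_1$). So drop the correlated ``reconciliation'' entirely; your argument up to that point, restricted to $\rho=0$, is the complete and intended proof.
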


\begin{proof}
This proof is similar to that in~\cite[Theorem 5]{Alos2015}, so we only sketch it. 
Differentiating the option price with respect to the log-moneyness yields
$$
\frac{\partial V_{t}}{\partial \alpha} = e^{X_{t}}\Dk\BS\left( s,0,e^{\alpha},I(t,s;\alpha )\right)
+e^{X_{t}}\Ds\BS\left( s,0,e^{\alpha},I(t,s;\alpha )\right) \frac{\partial I}{\partial \alpha }(t,s;\alpha),
$$
and
\begin{align}
\frac{\partial ^{2}V_{t}}{\partial \alpha ^{2}}
 & = e^{X_{t}}\Dkk\BS\left( s,0,e^{\alpha},I(t,s;\alpha
)\right)
 +{2}e^{X_{t}}\Dsk\BS\left(
s,0,e^{\alpha},I(t,s;\alpha )\right) \frac{\partial I}{\partial \alpha 
}(t,s;\alpha ) \nonumber \\
& +e^{X_{t}}\Dss\BS\left( s,0,e^{\alpha},I(t,s;\alpha )\right) \frac{\partial I}{\partial \alpha}(t,s;\alpha)^{2}\nonumber \\
 &+e^{X_{t}}\Ds\BS\left( s,0,e^{\alpha},I(t,s;\alpha )\right) \frac{\partial ^{2}I}{\partial \alpha ^{2}}
(t,s;\alpha ).\label{eq:sdi}
\end{align}
Combining~\eqref{eq:sdi} and~\eqref{TheBScase} yields 
$$
\left.\frac{\partial ^{2}V_{t}}{\partial \alpha ^{2}}\right|_{\alpha =\alpha^{*}}
 = e^{X_{t}}\Dkk\BS\left( s,0,e^{\alpha ^{*}},I(t,s;\alpha ^{*})\right)
 + e^{X_{t}}\Ds\BS\left( s,0,e^{\alpha^{*}},I(t,s;\alpha ^{*})\right) \frac{\partial ^{2}I}{\partial
\alpha ^{2}}(t,s;\alpha ^{*}).
$$
Then, taking into account Theorem~\ref{the:hw} and the fact that 
$I(t,s;\alpha ^{*})=\BS^{-1}(e^{-X_{t}}V_{t})$, we are able to write 
\begin{align*}
 & e^{-X_{t}}\Ds\BS\left( s,0,e^{\alpha^{*}},I(t,s;\alpha ^{*})\right) \frac{\partial ^{2}I}{
\partial \alpha ^{2}}(t,s;\alpha ^{*}) \\
 & = \left.\frac{\partial ^{2}V_{t}}{\partial \alpha ^{2}}\right|_{\alpha =\alpha ^{\ast}}
  - e^{-X_{t}}\Dkk\BS\left( s,0,e^{\alpha ^{*}},I(t,s;\alpha ^{*})\right) \\
 & = e^{-X_{t}}\EE^{*}_{t}\left(\Dkk\BS\left(
s,0,e^{\alpha^*},v_{s}\right) - \Dkk\BS\left( s,0,{e^{\alpha^*}},I(t,s;\alpha ^{*})\right)
\right) \\
 & = e^{X_{t}}\EE^{*}_{t}\left[ \Dkk\BS\left(
s,0,{e^{\alpha^*}},\BS^{-1}\circ\BS(s,0,{e^{\alpha^*}},v_{s})\right) \right. \\
 & \left. - \Dkk\BS\left( s,0,e^{\alpha^{*}},\BS^{-1}\circ\EE^{*}_{t}\left( \BS(s,0,{e^{\alpha^*}},v_{s}) \right) \right)\right] .
\end{align*}
Clark-Ocone's formula above and It\^{o}'s formula applied to the process 
\begin{equation*}
A_{u}:=\Dkk\BS\left( s,0,e^{\alpha^{*}},\BS^{-1}\circ\EE^{*}_{u}\left( \BS(s,0,{e^{\alpha^*}},v_{s}\right) \right)
\end{equation*}
imply the result after taking expectations.
\end{proof}

\subsubsection{The uncorrelated case}
We now move on to the proof of Theorem~\ref{el14} in the uncorrelated case $\rho=0$.
Lemma~\ref{lem:aux8} yields
\begin{align*}
 & \lim_{T\downarrow s}(T-s)\frac{\partial ^{2}I}{\partial \alpha ^{2}}(t,s;\alpha ^{*}) \\
 & = \displaystyle \lim_{T\downarrow s}(T-s)\left( \frac{\EE^{*}_{t}\left[
\int_{t}^{s}\partial^{2}_{aa}\Psi\left(\EE^{*}_{u}\left(
\BS\left( s,0,e^{\alpha^*},v_{s}\right) \right) \right) U_{u}^{2}du
\right] }{2\Ds\BS\left( s,0,{e^{\alpha^{*}}},I(t,s;\alpha ^{*})\right) }\right)  \\
 &  \displaystyle + \lim_{T\downarrow s}(T-s)\left( \frac{\EE^{*}_{t}\left[
\int_{s}^{T}\partial^{2}_{aa}\Psi\left(\EE^{*}_{u}\left(
\BS\left( s,0,\alpha ^{*},v_{s}\right) \right) \right) U_{u}^{2}du\right] 
}{2\Ds \BS\left( s,0,{e^{\alpha^*}},I(t,s;\alpha ^{*})\right) }\right)
=: \lim_{T\downarrow s}(T_{1}+T_{2}).
\end{align*}
Since by Assumption~\textbf{(H1)} the process~$\sigma$ is bounded below and above almost surely, 
using that $\BS(s,0,e^{\rha(T-s)},\cdot )$ is an increasing function, together with~\textbf{(H3)} and
$$
\frac{\partial^{2}\Psi}{\partial a^{2}}\left(\EE^{*}_{u}\left(\BS\left( s,0,\alpha ,v_{s}\right) \right)\right)
 = \frac{2\sqrt{2\pi}}{(T-s)^{3/2}}\frac{\exp\left\{\frac{T-s}{8}\left(\BS^{-1}\circ \EE^*_u(\BS(s,0,e^{\rha(T-s)},v_s))\right)^2\right\}}{
 \left(\BS^{-1}\circ\EE^*_u(\BS(s,0,e^{\rha(T-s)},v_s))\right)^3},
$$
we deduce that
$$
0 < T_{2}\leq C\EE^{*}_{t}\left( \int_{s}^{T}\frac{\exp{\left(\frac{C(T-s)}{8}\right)}}{{c^{3}(T-s)}}U_{u}^{2}du\right)
 \leq \frac{C}{T-s}\int_{s}^{T}\EE^{*}_{t}\left( U_{u}^{2}\right) du
 \leq C \sqrt{T-s},
$$
where $C>0$ is a constant that may change from line to line,
and hence $\lim_{T\downarrow s}T_{2}=0$.
Finally, Dominated Convergence Theorem and Lemma~\ref{lem:limimp} yield
\begin{align*}
\lim_{T\downarrow s}T_{1}
 & = \pi \lim_{T\rightarrow
s}\int_{t}^{s}\frac{\EE^{*}_{t}\left( U_{u}^{2}\right)}{I(u,s;\alpha ^{*})^{3}(T-s)}
\exp\left\{\frac{I(u,s;\alpha ^{*})^{2}(T-s)}{8}\right\} du \\
 & = \frac{1}{4}\lim_{T\downarrow s}\EE^{*}_{t}\int_{t}^{s}\frac{1}{I(u,s;\alpha ^{*})^{3}(T-s)}
\EE^{*}_{u}\left( \frac{\int_{s}^{T}(D_{u}^{W^{*}}
\sigma_{r}^{2})dr}{v_{s}\sqrt{T-s}}\right)^{2}du \\
 & = \frac{1}{4}\EE^{*}_{t}\left( \int_{t}^{s}
 \EE^{*}_{u}\left( \frac{D_{s}^{W*}\sigma_{r}^{2}}{\sigma_{s}}\right)^{2}
 \frac{du}{\EE^{*}_{u}(\sigma_{s})^{3}}\right),
\end{align*}
which concludes the proof.

\subsubsection{The general case}
Using Theorem~\ref{the:hw} and~\eqref{eq:sdi}, we can write
\begin{align*}
 & e^{X_{t}}\Ds\BS\left( s,0,e^{\alpha^{*}},I(t,s;\alpha^* )\right) \frac{\partial ^{2}I}{\partial \alpha ^{2}}(t,s;\alpha^* )\\
 & = e^{X_{t}}\EE^{*}_{t}\left[\Dkk\BS
\left( s,0,{e^{\alpha^*}},v_{s}\right) - \Dkk\BS\left( s,0,e^{\alpha^*},I(t,s;\alpha^* )\right)
\right]\\
& -2e^{X_{t}}\partial^{2}_{\sigma k}\BS\left( s,0,
{e^{\alpha^*}},I(t,s;\alpha^* )\right) \frac{\partial I}{\partial
\alpha }(t,s;\alpha^*)
 -e^{X_{t}}\Dss\BS\left( s,0,e^{\alpha^*},I(t,s;\alpha ^{*})\right)
 \left[\frac{\partial I}{\partial \alpha }(t,s;\alpha ^{*})\right]^{2}\\
& +\frac{\rho }{2}\EE^{*}_{t}\left. \left[
\int_{s}^{T}e^{-\rha(u-t)}\Dkk H(u,X_{u},M_{u},v_{u})\sigma_{u}\Lambda_{u}^{W^{*}}du\right. \right\vert_{\alpha ={\alpha ^{*}}}\\
& \left. + \Dkk G(s,0,{e^{\alpha^*}},v_{s})
\int_{t}^{s}e^{-\rha(u-t)}e^{X_{u}}\sigma_{u}\Lambda_{u}^{W^{*}}du\right].
\end{align*}

Therefore
\begin{align*}
 & (T-s)\partial^{2}_{\alpha\alpha}I(t,s;\alpha^{*})\\
 & = (T-s)\frac{\EE^{*}_{t}\left(\Dkk\BS\left( s,0,{e^{\alpha^*}},v_{s}\right)
 - \Dkk\BS\left( s,0,e^{\alpha^*},I^{0}(t,s;\alpha ^{*})\right)\right)}{\Ds\BS\left( s,0,e^{\alpha^*},I(t,s;\alpha ^{*})\right) } \\
 & + (T-s)\frac{\EE^{*}_{t}\left(\Dkk\BS
\left( s,0,{e^{\alpha^*}},I^{0}(t,s;\alpha ^{*})\right) - \Dkk\BS\left( s,0,{e^{\alpha^*}}
,I(t,s;\alpha ^{*})\right) \right) }{\Ds\BS\left( s,0,{e^{\alpha^*}},I(t,s;\alpha ^{*})\right) } \\
 & - 2(T-s)\frac{\partial^{2}_{\sigma k}\BS\left( s,0,{
e^{\alpha^*}},I(t,s;\alpha ^{*})\right) \frac{\partial I}{
\partial \alpha }(t,s;\alpha ^{*})}{\Ds\BS
\left( s,0,{e^{\alpha^*}},I(t,s;\alpha ^{*})\right) }\\
& - (T-s)\frac{\Dss\BS\left( s,0,{e^{\alpha^*}},I(t,s;\alpha ^{*})\right) \left(\partial_{\alpha}I(t,s;\alpha ^{*})\right) ^{2}}{\Ds\BS\left( s,0,{e^{\alpha^*}},I(t,s;\alpha ^{\ast
})\right) }\\
&+ \frac{\rho(T-s)}{2e^{X_{t}}}\left. \frac{\EE^{*}_{t}\left[
\int_{s}^{T}e^{-\rha\left( u-t\right)}\Dkk H(u,X_{u},M_{u},v_{u})\sigma_{u}\Lambda_{u}^{W^{*}}du\right. }{
\Ds\BS\left( s,0,{e^{\alpha^*}}
,I(t,s;\alpha ^{*})\right) }\right\vert_{\alpha =\alpha ^{*}} \\
& + \frac{\rho(T-s)}{2e^{X_{t}}}\frac{\EE^{*}_{t}\left(\Dkk G(s,0,{e^{\alpha^*}}
,v_{s})\int_{t}^{s}e^{-\rha(u-t)}e^{X_{u}}\sigma_{u}\Lambda
_{u}^{W^{*}} du\right)}{\Ds\BS\left( s,0,e^{\alpha^*},I(t,s;\alpha ^{*})\right)}\\
& =: T_{1}+T_{2}+T_{3}+T_{4}+T_{5}+T_{6}.
\end{align*}
Here $I^{0}(t,s;\alpha )$ denotes the forward implied volatility in the
uncorrelated case.
By definition of the Black-Scholes function and using Theorem~\ref{the:pdi}, we have
$\lim_{T\downarrow s}\left( T_{3}+T_{4}+T_{5}\right) \ =0$, and 
$$
\lim_{T\downarrow s}\frac{\partial_{\sigma}\BS\left( s,0,{
e^{\alpha^*}},I(t,s;\alpha ^{*})\right)}{\partial_{\sigma}\BS\left( s,0,{e^{\alpha^*}},I^{0}(t,s;\alpha ^{\ast
})\right) }=1.
$$
Then, the proof of Lemma~\ref{lem:aux8} yields
$\lim_{T\downarrow s}T_{1}=\lim_{T\downarrow s}(T-s)\partial^{2}_{kk}I^{0}(t,s,\alpha ^{*})$.
Computing the second derivative $\partial^2_{kk}\BS$, we also obtain
$$
\lim_{T\downarrow s}T_{2}=\lim_{T\downarrow s}\left( \frac{1}{
I^{0}(t,s;\alpha ^{*})}-\frac{1}{I(t,s;\alpha ^{*})}\right).
$$
On the other hand,
\begin{align*}
\lim_{T\downarrow s}T_{6}
 & = \frac{\rho }{2}e^{-X_{t}}\lim_{T\downarrow s}{\ }(T-s)\frac{\EE^{*}_{t}\left(\partial^{2}_{kk}G(s,0,{e^{\alpha^*}},v_{s})\int_{t}^{s}e^{-\rha
(u-t)}e^{X_{u}}\sigma_{u}\Lambda_{u}^{W^{*}}du\right)}{\partial_{\sigma}\BS\left( s,0,{e^{\alpha^*}},I(t,s;\alpha ^{\ast})\right) } \\
 & = -\frac{\rho }{2}e^{-X_{t}}\EE^{*}_{t}\left( \frac{1}{\sigma_{s}^{3}}
\int_{t}^{s}e^{-\rha(u-t)}e^{X_{u}}\sigma_{u}\left( D_{u}^{W^{*}}\sigma
_{s}^{2}\right) du\right).
\end{align*}
Finally, the result is a consequence of Lemma~\ref{lem:limimp} and Theorem~\ref{el13}.


\vspace{0.5cm}

\noindent\textbf{Aknowledgements.} 
Elisa  Al\`{o}s  acknowledges support from the Spanish Ministry of Economy and Competitiveness through the Severo Ochoa Programme for Centers of Excellence in R\&D (SEV-2015-0563), and grants ECO2014-59885-P and MTM2016-76420-P (MINECO/FEDER,
UE).
Jorge  A. Le\'on was partially supported by CONACyT grant 220303 and  thanks Universitat Barcelona, Universitat Pompeu Fabra and the Centre de Recerca Matem\`atica of  Universitat Aut\`onoma de Barcelona for their hospitality and financial support.
Antoine Jacquier acknowledges financial support from the EPSRC First Grant EP/M008436/1.

\end{document}